\documentclass[11pt,a4paper]{article}
\usepackage[margin=1in]{geometry}
\usepackage{tikz}
\usetikzlibrary{positioning,arrows.meta}
\usepackage{subcaption}

\usepackage[english]{babel}
\usepackage{algpseudocode}
\usepackage{makecell}
\usepackage[english]{babel}
\usepackage{array} 
\usepackage{booktabs} 
\usepackage{authblk} 

\usepackage[ruled,linesnumbered,vlined]{algorithm2e}
\usepackage{algpseudocode}
\usepackage{amssymb}
\usepackage{mathtools}

\usepackage{amsmath}
\usepackage{amsthm}
\usepackage{graphicx}
\usepackage[colorlinks=true, allcolors=blue]{hyperref}
\usepackage{natbib}
\usepackage{framed}
\definecolor{shadecolor}{gray}{0.9}
\usepackage{pgfplots}
\pgfplotsset{compat=1.17}
\usepackage{tikz}
 \usetikzlibrary{patterns}
\usepackage{pgfplotstable}
    \pgfplotsset{
    name nodes near coords/.style={
        every node near coord/.append style={
            name=#1-\coordindex,
            alias=#1-last,
        },
    },
    name nodes near coords/.default=coordnode
    }

\usepackage{multirow}

\usepackage[table]{xcolor} 

\newtheorem{theorem}{Theorem}[section]
\newtheorem{definition}[theorem]{Definition}

\newtheorem{corollary}[theorem]{Corollary}
\newtheorem{lemma}[theorem]{Lemma}

\newtheorem{example}[theorem]{Example}
\newtheorem{observation}[theorem]{Observation}

\newcommand{\bX}{\mathbf{X}}
\newcommand{\bB}{\mathbf{B}}

\newcommand{\bc}{\mathbf{c}}

\newcommand{\MMS}{\mathsf{MMS}}

\definecolor{ruleblue}{RGB}{70,120,200}
\definecolor{shadegray}{gray}{0.9}

\makeatletter
\renewcommand{\and}{\end{tabular}\hspace{3em}\begin{tabular}[t]{c}}
\makeatother
\begin{document}
\title{Allocating Chores with Restricted Additive Costs: \\
Achieving EFX, MMS, and Efficiency Simultaneously\thanks{The authors are ordered alphabetically. Xiaowei Wu is funded by the Science and Technology Development Fund (FDCT), Macau SAR (file no. 0147/2024/RIA2, 0014/2022/AFJ, 0085/2022/A, and 001/2024/SKL). 
This work is partially supported by the Ministry of Education, Singapore, under its Academic Research Fund Tier 1 (RG98/23).
}}

\author[1]{Zehan Lin}
\author[1]{Xiaowei Wu}
\author[2]{Shengwei Zhou}
\affil[1]{University of Macau, \texttt{\{yc47490,xiaoweiwu\}@um.edu.mo}}
\affil[2]{Nanyang Technological University, \texttt{shengwei.zhou@ntu.edu.sg}}
\date{}

\maketitle

\begin{abstract}
  In a web-based review platform, papers from various research fields must be assigned to a group of reviewers. 
  Each paper has an inherent cost, which represents the effort required for reading and evaluating it (e.g., the paper's length). 
  Reviewers can bid on papers they are interested in, and if they are assigned a paper they have bid on, no cost is incurred. Otherwise, the inherent cost $c(e)$ for paper $e$ applies. 
  We capture this with a model of restricted additive costs: every item $e$ has a cost $c(e)$, and each agent either incurs $0$ or $c(e)$ for $e$. 
  In this work, we study how to allocate such chores fairly and efficiently.
  We propose an algorithm for computing allocations that are both EFX and MMS.
  Furthermore, we show that our algorithm achieves a $2$-approximation of the optimal social cost, and the approximation ratio is optimal.
  We also show that slightly weaker fairness guarantees can be obtained if one requires the algorithm to run in polynomial time.
\end{abstract}

\maketitle

\section{Introduction}
Fair allocation problems focus on allocating a set $M$ of $m$ items among a group $N$ of $n$ agents, where agents have heterogeneous valuation functions over the items.
When the valuations are positive, the items are viewed as goods (e.g., resources), whereas negative valuations indicate that the items are chores (e.g., tasks).
The objective of fair allocation is to compute an allocation that ensures fairness while ideally also achieving efficiency.

In recent years, fair division has become increasingly relevant in web-based platforms, where resources or tasks need to be allocated among multiple agents with varying preferences. Examples include the fair allocation of advertising spaces on digital platforms, task assignments in collaborative online projects, and even the distribution of computational resources in cloud services. 
In such settings, ensuring fairness in the distribution of resources or responsibilities, while also optimizing social costs, presents a significant challenge.
The growing importance of such scenarios in online environments motivates the study of fair allocation algorithms.

In this work, we focus on fair allocations for chores, where each agent has an additive cost function that assigns a non-negative cost to each item.
Various notions have been proposed to measure the fairness of allocations.
Envy-freeness (EF) is one of the most extensively studied among them.
An allocation is EF if no agent prefers another agent’s bundle over her own.
However, EF allocations are not guaranteed to exist when items are indivisible.
As a result, much attention has been given to its relaxations.
Two widely studied relaxations are envy-freeness up to one item (EF1)~\cite{conf/sigecom/LiptonMMS04} and envy-freeness up to any item (EFX)~\cite{journals/teco/CaragiannisKMPS19}.
Specifically, EF1 requires that any envy between two agents can be eliminated by removing some item from the envious agent’s bundle, whereas EFX requires that envy can be eliminated by removing any item from the envious agent’s bundle.

\paragraph{EFX Allocations for Chores.}
The existence of EFX allocations is regarded as the central problem in the field of fair allocation and referred to as “fair allocation’s most enigmatic question” by Procaccia~\cite{journals/cacm/Procaccia20}.
For two agents with general cost functions, EFX allocations can be computed by the divide-and-choose algorithm. 
However, the existence of EFX allocations remains unknown for $n\geq 3$ agents with additive functions.
A sequence of works tried to answer this question by considering important special cases.
Aziz et al.~\cite{conf/atal/0001LRS23} demonstrated that EFX allocations can be computed when the set of chores can be divided into two types, such that all agents perceive any two items of the same type as identical.
For identical ordinal preference (IDO) instances, where all agents share the same preference ordering over items, Aziz et al.~\cite{journals/ai/AzizLMWZ24} showed that EFX allocations are guaranteed to exist.
Tao et al.~\cite{journals/tcs/TaoWYZ25} proved the existence of EFX allocations for binary instances, where the cost of any item to any agent is either $0$ or $1$.
The existence of EFX allocations has also been established under other structured settings, such as leveled preferences~\cite{journals/teco/GafniHLT23}, lexicographic preferences~\cite{conf/atal/HosseiniSVX23}, and when $m \leq 2n$~\cite{conf/stoc/GargMQ25, journals/tcs/KobayashiMS25}.
Nevertheless, extending these results to more general settings remains highly non-trivial. 
Even for the seemingly modest case of bi-valued instances (which generalizes the binary instances by allowing item costs to be either $1$ or some fixed constant $\epsilon \in (0, 1)$), existence results have been established only in the case of three agents or when $\epsilon = 0.5$, despite significant efforts~\cite{journals/ai/ZhouW24,journals/tcs/KobayashiMS25,conf/ijcai/GargMQ23, journals/corr/abs-2501-04550}.



\paragraph{MMS Allocations for Chores.} 
Another well-studied fairness notion is the maximin share (MMS), introduced by Budish~\cite{conf/bqgt/Budish10}.
For each agent, the MMS is defined as the minimum cost she can guarantee for herself by partitioning all items into 
$n$ bundles and then receiving the least preferred bundle.
When $n = 2$, the divide-and-choose algorithm can be used to compute MMS allocations. 
However, for $n \geq 3$, MMS allocations are not guaranteed to exist~\cite{conf/wine/FeigeST21, conf/aaai/AzizRSW17}.
Consequently, much of the research has shifted toward computing approximate MMS allocations, e.g., see~\cite{conf/aaai/AzizRSW17,journals/mp/AzizLW24,conf/sigecom/FeigeH23,journals/teco/BarmanK20,conf/sigecom/HuangL21}.
The state-of-the-art approximation ratio of MMS for chores is $13/11$ by Huang and Segal-Halev~\cite{conf/sigecom/HuangS23}.
On the negative side, Feige et al.~\cite{conf/wine/FeigeST21} showed that no algorithm can achieve an approximation ratio better than $44/43$, even for three agents.

\paragraph{Restricted Additive Functions.} 
In the restricted additive setting\footnote{For simplicity, in the remainder of this paper we refer to the restricted additive setting as the restricted setting, when the context is clear.}, each item $e$ has an inherent value $c(e)$ so that for any agent $i$, we have $c_i(e) \in \{0, c(e)\}$.
The restricted setting was extensively studied in prior work, which mainly focus on the case of goods.
The setting has been well-studied in the Santa Claus problem~\cite{journals/talg/AsadpourFS12, journals/rsa/SahaS18, conf/stoc/BateniCG09, conf/focs/ChakrabartyCK09}, which focuses on the computational complexity of maximizing the minimum utility. 
The setting in fair allocation has also drawn significant attention recently, e.g.,
Akrami et al.~\cite{conf/ijcai/AkramiRS22} showed that EF$2$X\footnote{EF$2$X allocations for goods require that the envy between two agents can be eliminated by removing any two items from the envied agent.} allocations for goods exist when agents have restricted additive functions.
The result was partially improved by Kaviani et al.~\cite{journals/corr/abs-2407-05139}, who showed that EFX allocations can be computed when each item is only relevant to two agents.
Mashbat et al.~\cite{conf/sagt/BotanRSW23} proved the existence of MMS allocations for three agents under restricted setting.
However, very few works have considered the setting for chores. 
The work by Sun and Chen~\cite{journals/eor/SunC25} is a notable exception, who proposed a group-strategyproof randomized mechanism with best-of-both-worlds fairness guarantee.
Therefore, it remains unknown whether EFX allocations exist for chores in the restricted setting.
Restricted additive functions for chores can arise in many scenarios. 
For instance, in a web-based review platform, papers must be assigned to a group of reviewers. 
Each paper has an inherent cost, which represents the effort required to evaluate it (e.g., the paper’s length). 
Reviewers can bid on papers they are interested in, and if they are assigned a paper they bid on, no cost is incurred. Otherwise, the inherent cost $c(e)$ for paper $e$ applies.


\paragraph{Efficiency.}
Besides fairness, efficiency is another important measure of the quality of allocations. 
Pareto optimality (PO) is one of the most widely used measures of efficiency. 
An allocation is PO if no other allocation can make at least one agent strictly better off without making any other agent worse off. 
Another important efficiency measure is the social cost, which is the total (sum of) cost incurred by all agents.
The optimal social cost can be achieved by allocating each item to the agent who incurs the least cost for it.
However, this often results in a highly unfair allocation.
In recent years, the existence of allocations that are both fair and efficient has drawn significant attention. 
Unfortunately, efficiency and fairness often conflict with each other.
Requiring an allocation to satisfy certain fairness guarantees may lead to a loss in efficiency, i.e., the resulting social cost may be higher than the (unconstrained) optimal.
To quantify the increase in social cost due to the fairness constraint, Bertsimas et al.~\cite{journals/ior/BertsimasFT11} and Caragiannis et al.~\cite{journals/mst/CaragiannisKKK12} introduced the price of fairness (PoF), which is the ratio between the optimal social cost of fair allocations and the unconstrained optimum.

\subsection{Discussion on Existing Algorithms}

We first discuss some natural algorithms for computing EFX allocations for chores in other related settings, and show some difficulties of extending these algorithms to the restricted setting.
Via these discussions, we identify several useful observations that will inspire and guide the design of our algorithm.
Observe that when $c(e) = 1$ for all $e\in M$, the restricted setting degenerates to the binary setting (in which $c_i(e)\in \{0,1\}$ for all $i\in N$ and $e\in M$). 
Therefore, it is natural to consider extensions of algorithms for computing EFX allocations for binary instances to the restricted setting.
In the algorithm by Tao et al.~\cite{journals/tcs/TaoWYZ25}, the items with cost $0$ to some agent are allocated first, which is natural and reasonable, as they do not incur any cost to the receiving agent.
We use $M^0$ to denote this set of items.
Then they try to allocate items with non-zero costs to all agents one by one.
During this process, previously assigned items may be reallocated.
Their algorithm guarantees that every item $e\in M^0$ will be allocated to an agent with zero cost on $e$ after the reallocations.
Therefore, in addition to guaranteeing EFX, the returned allocation is PO as it minimizes social cost. However, they also give a simple example showing that it is not possible to extend their algorithm to the more general restricted setting:

\begin{table}[h!] 
\centering 
\begin{tabular}{lccc} 
\toprule
\textbf{Agent} & \textbf{$e_1$} & \textbf{$e_2$} & \textbf{$e_3$} \\
\midrule
\ \ \ \ 1  & $1 + \epsilon$ & $1$ & $0$ \\
\ \ \ \ 2  & $1 + \epsilon$ & $0$ & $1$ \\
\bottomrule
\end{tabular}
\caption{A hard instance modified from~\cite{journals/tcs/TaoWYZ25}, where $\epsilon > 0$.} \label{Table: NonEFXandPO}
\end{table}

In the above instance, if we allocate item $e_2$ to agent $2$ and $e_3$ to agent $1$, then no matter who receives item $e_1$, the resulting allocation is not EFX.
In other words, to ensure EFX, we need to allocate either $e_2$ to agent $1$ or $e_3$ to agent $2$.
Indeed, allocating $e_1$ to agent $1$ and the remaining items to agent $2$ gives an EFX allocation.

\begin{observation} \label{observation:allocate-non-zero}
    To compute EFX allocations, we sometimes need to allocate an item $e$ to agent $i$ with $c_i(e) > 0$, even if there exists another agent $j\neq i$ with $c_j(e) = 0$.
\end{observation}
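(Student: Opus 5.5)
The plan is to prove the observation by exhibiting a single instance in which no EFX allocation gives every item to an agent who values it at zero. The instance in Table~\ref{Table: NonEFXandPO} is the witness. The item $e_1$ costs $1+\epsilon$ to both agents. Item $e_2$ is zero-cost only for agent $2$, and item $e_3$ is zero-cost only for agent $1$. I will show that every allocation giving $e_2$ to agent $2$ and $e_3$ to agent $1$ violates EFX. Then I will exhibit an EFX allocation, so that EFX is satisfiable but only by assigning some item with a zero-cost agent to a positive-cost agent.

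\textbf{First step.} Fix $e_2 \in X_2$ and $e_3 \in X_1$, and split into two cases according to who receives $e_1$.

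If agent $1$ receives $e_1$, then $X_1=\{e_1,e_3\}$ and $X_2=\{e_2\}$. Agent $1$ has cost $c_1(X_1)=1+\epsilon$. Removing $e_3$, which costs her $0$, leaves $1+\epsilon$. She compares this with $c_1(X_2)=1$. Since $1+\epsilon>1$, the EFX condition fails for the pair $(1,2)$.

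Symmetrically, if agent $2$ receives $e_1$, then $X_2=\{e_1,e_2\}$. Removing $e_2$ leaves $c_2=1+\epsilon$, while $c_2(X_1)=c_2(e_3)=1$. So EFX fails for agent $2$.

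This step uses the variant of EFX for chores in which removing \emph{any} item, including a zero-cost one, must eliminate envy. The argument depends on that choice, so I will state explicitly that this is the definition in force. Under the weaker variant that only considers positive-cost items, the counterexample would collapse.

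\textbf{Second step.} Verify that $X_1=\{e_1\}$, $X_2=\{e_2,e_3\}$ is EFX.

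For agent $1$, $c_1(X_1)=1+\epsilon$ and her only item is $e_1$. Removing it leaves cost $0$, which is at most $c_1(X_2)=1$, so she satisfies EFX.

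For agent $2$, $c_2(X_2)=0+1=1$ and $c_2(X_1)=1+\epsilon$. She is already envy-free.

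This allocation assigns $e_3$ to agent $2$ with $c_2(e_3)=1>0$, even though $c_1(e_3)=0$. This is exactly the phenomenon the observation claims.

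\textbf{Main obstacle.} There is no real difficulty here. The only delicate point is the one noted in the first step: the EFX definition must require envy to vanish after removing any item, including zero-cost ones. I would address it by stating the definition before running the case analysis.
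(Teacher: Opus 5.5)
Your proposal is correct and follows the paper's argument: you use the same instance (Table~\ref{Table: NonEFXandPO}), show that any allocation giving $e_2$ to agent $2$ and $e_3$ to agent $1$ violates EFX regardless of who receives $e_1$, and exhibit the same EFX allocation $X_1=\{e_1\}$, $X_2=\{e_2,e_3\}$. Your remark that the argument needs the paper's EFX definition, under which removing a zero-cost item must also eliminate envy, is accurate and matches the definition the paper uses.
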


Moreover, it can be verified that every PO allocation must allocate item $e_2$ to agent $2$ and item $e_3$ to agent $1$, which implies that EFX is incompatible with PO in the restricted setting.

\begin{observation} \label{observation:EFX-PO-incompatible}
    EFX and PO allocations may not exist in the restricted setting.
\end{observation}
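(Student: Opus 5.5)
The plan is to use the instance in Table~\ref{Table: NonEFXandPO} as the counterexample and show two things about it. First, every PO allocation gives $e_2$ to agent $2$ and $e_3$ to agent $1$. Second, no allocation with that assignment of $e_2,e_3$ is EFX. Together these show that no allocation of this instance is both EFX and PO, which proves Observation~\ref{observation:EFX-PO-incompatible}. We fix $\epsilon>0$ throughout.

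\textbf{Step 1 (PO forces the zero-cost assignment).} We argue by contradiction with a Pareto improvement. Suppose an allocation $\bX=(X_1,X_2)$ gives $e_2$ to agent $1$, or gives $e_3$ to agent $2$. Let $\bX'$ be the allocation obtained by moving every such item to the agent who values it at $0$: $e_2$ goes to agent $2$ and $e_3$ goes to agent $1$. Agent $2$'s cost for $e_2$ is $0$ and agent $1$'s cost for $e_3$ is $0$, so receiving these items costs the new holder nothing. The agent who gives up such an item saves its full cost, which is $1$ in either case. Item $e_1$ stays where it was. Hence in $\bX'$ no agent's cost increases, and at least one agent's cost strictly decreases by $1$. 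This means $\bX$ is Pareto dominated. So every PO allocation has $e_2\in X_2$ and $e_3\in X_1$.

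\textbf{Step 2 (those allocations are not EFX).} With $e_3\in X_1$ and $e_2\in X_2$ fixed, the only remaining choice is who receives $e_1$, so there are two cases. By the symmetry of the instance, it suffices to treat the case $X_1=\{e_1,e_3\}$ and $X_2=\{e_2\}$. Then $c_1(X_1)=1+\epsilon$ and $c_1(X_2)=1$. Remove $e_3$ from agent $1$'s bundle: this lowers her cost only to $c_1(X_1\setminus\{e_3\})=1+\epsilon>1=c_1(X_2)$. So agent $1$ still envies agent $2$ after removing one of her items, and EFX fails. The other case, $X_1=\{e_3\}$ and $X_2=\{e_1,e_2\}$, is identical with the roles of the agents swapped: removing $e_2$ leaves agent $2$ with cost $1+\epsilon>1=c_2(X_1)$.

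Combining Steps 1 and 2 gives the observation. No step is a real obstacle; the only point needing care is the exact definition of EFX for chores. Removing a zero-cost item from the envious agent's bundle must count as a violation. This is the standard definition where envy must vanish after removing \emph{any} item; under the variant that only considers removing positive-cost items, the instance above would need to be modified.
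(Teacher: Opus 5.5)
Your proof is correct and follows the paper's argument: the paper uses the same instance from Table~\ref{Table: NonEFXandPO}, observes that giving $e_2$ to agent $2$ and $e_3$ to agent $1$ rules out EFX no matter who receives $e_1$, and notes that every PO allocation must make exactly that assignment. Your Step 1 spells out the Pareto-improvement check that the paper leaves as ``it can be verified,'' and your reading of EFX (removing any item, including zero-cost ones) matches the paper's definition.
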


We also consider other natural algorithms, e.g., the algorithm of Aziz et al.~\cite{journals/ai/AzizLMWZ24} for computing EFX allocations for IDO instances based on envy-cycle elimination.
Their algorithm allocates items one by one based on a directed envy graph, where each vertex represents an agent, and a directed edge from agent $i$ to agent $j$ indicates that $i$ envies $j$.
In each round, an agent who currently does not envy any other agent is chosen to receive the unallocated item with maximum cost, where such an agent is guaranteed to exist by eliminating directed cycles in the graph.
Their algorithm guarantees EFX since every agent is EF towards all other agents after removing the last received item, and the last received item is the item with the smallest cost by the property of IDO instances.
Naturally, we can extend the algorithm to the restricted setting, where the ordering of items is induced by $c$, e.g., we allocate items in decreasing order of $c(e)$.
Unfortunately, the algorithm fails to compute EFX allocations, because if an agent $i$ received an item $e$ with cost $0$, then even if agent $i$ is EF towards all other agents, we cannot pick $i$ to receive an item without violating EFX.
Consider the following simple example:

\begin{table}[h!] 
\centering 
\begin{tabular}{lcccc} 
\toprule
\textbf{Agent} & \textbf{$e_1$} & \textbf{$e_2$} & \textbf{$e_3$} & \textbf{$e_4$}\\
\midrule
\ \ \ \ 1  & $1 + \epsilon$ & $1$ & $0$ & $1$ \\
\ \ \ \ 2  & $1 + \epsilon$ & $0$ & $1$ & $1$ \\
\bottomrule
\end{tabular}
\caption{A hard instance for the envy-cycle elimination algorithm by Aziz et al.~\cite{journals/ai/AzizLMWZ24}.} \label{Table:not-EFX-for-ECE}
\end{table}

Following the algorithm, item $e_1$ will be allocated to one agent (say, agent $1$) while $e_2$ and $e_3$ will be allocated to the other agent (say, agent $2$).
Then agent $2$ will be chosen again to receive item $e_4$, which results in an allocation that is not EFX, since agent $2$ has received an item with cost $0$.
Note that allocating item $e_4$ to agent $1$ will also violate EFX.
From this example, we observe that we should be very careful in deciding the agents to receive items with cost $0$, as such an operation will strengthen the fairness requirement for the agent from EFX to EF.

\begin{observation} \label{observation:EFX-becomes-EF}
    If an agent $i$ receives item $e$ with $c_i(e) = 0$, then we need to ensure that agent $i$ is envy-free towards others.
\end{observation}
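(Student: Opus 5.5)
Observation~\ref{observation:EFX-becomes-EF} follows directly from the definition of EFX for chores. Recall that an allocation $\bX=(X_1,\dots,X_n)$ is EFX if for every pair of agents $i,j$ and every item $e\in X_i$ we have $c_i(X_i\setminus\{e\})\le c_i(X_j)$. So I would fix an agent $i$ and an item $e\in X_i$ with $c_i(e)=0$, and write out what the EFX condition says for this specific $e$.

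The key step uses additivity. Since $c_i$ is additive and $c_i(e)=0$, we get $c_i(X_i\setminus\{e\}) = c_i(X_i)-c_i(e) = c_i(X_i)$. The EFX requirement for agent $i$ towards any $j\neq i$, applied to this item $e$, then becomes $c_i(X_i)\le c_i(X_j)$. This is exactly envy-freeness of $i$ towards $j$. Conversely, if $i$ is EF towards everyone, then removing any item from $X_i$ only weakly lowers $c_i$, so the EFX condition holds for $i$. Hence, once $i$ holds a zero-cost item, EFX for agent $i$ is \emph{equivalent} to EF for agent $i$. This is why such an agent cannot be handed a further positive-cost item while she merely satisfies EFX with slack coming from her last item, as the instance in Table~\ref{Table:not-EFX-for-ECE} illustrates.

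No step is a real obstacle; the argument is essentially one line once the definition is unfolded. The only point needing care is to quantify correctly: EFX demands the inequality for \emph{every} removed item, so it suffices to pick the zero-cost item $e$. It is also worth noting that this holds regardless of whether other agents value $e$ at $0$ or $c(e)$, because only $c_i$ enters agent $i$'s fairness condition.
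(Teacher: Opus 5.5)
Your argument is correct and is the same reasoning the paper relies on. The paper gives no formal proof; it justifies the observation through the instance in Table~\ref{Table:not-EFX-for-ECE} and the remark that receiving a zero-cost item strengthens agent $i$'s requirement from EFX to EF, which is exactly your one-line unfolding $c_i(X_i - e) = c_i(X_i) \le c_i(X_j)$ (the converse you add, EF implies EFX by nonnegativity of costs, is also correct though not needed).
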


In summary, given the difficulty of extending existing work to the restricted setting, the existence of EFX allocations for chores remains open.
Moreover, since EFX and PO are incompatible, it is unknown whether we can have efficiency guarantees in addition to being EFX.

\subsection{Our Results}

In this work, we answer the above questions by proposing an algorithm for the computation of EFX allocations.
In addition, we show that the returned allocation is also MMS and achieves an optimal approximation ratio for the optimal social cost.

\medskip
\noindent
{\bf Main Result} (Theorem~\ref{theorem: EFX} and Theorem~\ref{theorem: effiency}){\bf .} \label{Result1}
{\em There exists an algorithm for computing allocations that are \emph{EFX}, \emph{MMS} and achieve a $2$-approximation of the optimal social cost for any restricted instance. Furthermore, the approximation ratio is optimal.}

\medskip

Compared to EFX, MMS is relatively easy to obtain under the restricted setting.
Therefore, in the following, we mainly discuss the algorithmic ideas for the computation of EFX allocations.

Motivated by Observation~\ref{observation:allocate-non-zero}, our algorithm takes a different approach from that of Tao et al.~\cite{journals/tcs/TaoWYZ25} and begins by allocating items $e \in M \setminus M^0$, which we refer to as $M^+$.
Note that for all $i \in N$ and $e \in M^+$, we have $c_i(e) = c(e)$.
Therefore, we can compute an EFX partition $\bB = \{B_1,B_2,\ldots,B_n\}$ for $M^+$, and it remains to allocate items in $M^0$.

A key observation is that we should not decide the assignment of bundles in $\bB$ \emph{before} the allocation of $M^0$: by Observation~\ref{observation:EFX-becomes-EF}, an agent cannot receive further zero-cost items unless she receives the best bundle in $\bB$.
Motivated by the matching-based approach~\cite{journals/ior/AkramiACGMM25,journals/talg/GargKK23,journals/corr/abs-2404-18133,journals/tcs/KobayashiMS25} for computing fair allocations, we allocate items in $M^0$ to bundles in $\bB$ without deciding the assignment of bundles.
Specifically, we let agents take turns to add unallocated items they consider having zero cost to their \emph{favorite} bundle in $\bB$.
We show that the EFX-representing graph between the agents and the final bundles (where there is an edge between an agent and a bundle if the bundle is EFX-feasible to the agent) admits a perfect matching, which gives an EFX allocation.
Since we may inevitably have to allocate items in $M^0$ to an agent with non-zero cost on the item (see Observation~\ref{observation:allocate-non-zero} and~\ref{observation:EFX-PO-incompatible}), to further guarantee a small social cost, we need to bound the increase in social cost due to these items.
Fortunately, by the design of our algorithm, we can show that agents receiving these items are EF towards all other agents, and there are at most $n/2$ such agents.
This allows us to establish a $2$ approximation of the optimal social cost, which is optimal: for the hard instance shown in Table~\ref{Table: NonEFXandPO}, the optimal social cost is $1+\epsilon$ while every EFX allocation has a social cost of $2+\epsilon$.
Our result implies the price of EFX is $2$ in the restricted settings. 
Interestingly, this stands in sharp contrast to prior work, where the price of EFX (and even EF1) has been shown to be unbounded in general instances~\cite{journals/aamas/SunCD23a}.

However, our algorithm is not guaranteed to terminate in polynomial time, e.g., it is NP-hard to compute MMS allocations even for identical agents.
Therefore, we also propose a polynomial-time algorithm that computes allocations that are EFX and $4/3$-MMS (see Section~\ref{ssec:poly-time-alg}).

\subsection{Other Related Works}

Due to the vast literature on fair allocation, in the following, we only review some of the most related works. For a more comprehensive review, please refer to the recent surveys by Aziz et al.~\cite{journals/sigecom/AzizLMW22} and Amanatidis et al.~\cite{journals/ai/AmanatidisABFLMVW23}.

\paragraph{Approximate EFX Allocations.} 
Given the difficulty of computing exact EFX allocations, researchers have focused on computing approximate EFX allocations.
Zhou and Wu~\cite{journals/ai/ZhouW24} presented a polynomial-time algorithm that computes a $(2 + \sqrt{6})$-approximate EFX (i.e., $(2 + \sqrt{6})$-EFX) allocation for three agents.
The ratio has been improved to $2$ by Afshinmehr et al.~\cite{journals/corr/abs-2410-18655} and Christoforidis et al.~\cite{conf/ijcai/ChristoforidisS24}.
For general additive instances, Garg et al.~\cite{conf/stoc/GargMQ25} proved the existence of $4$-EFX allocations.
This result was further improved to $2$-EFX by Garg et al.~\cite{DBLP:journals/corr/2EFX}.
For bi-valued instances (where the cost of an item to an agent is either $1$ or $\epsilon$), they showed that $3$-EFX and PO allocations always exist.
The ratio has recently been improved to $2-\epsilon$ by Lin et al.~\cite{journals/corr/abs-2501-04550}.

\paragraph{EFX Allocations for Goods.} 
Compared to the allocations for chores, the case of goods admits more fruitful results.
The EFX allocations are shown to exist for two agents with arbitrary valuations or any number of agents with identical valuations by Plaut et al.~\cite{journals/siamdm/PlautR20}; three agents by Chaudhury et al.~\cite{journals/jacm/ChaudhuryGM24} and Akrami et al.~\cite{journals/ior/AkramiACGMM25}; two types of agents by Mahara~\cite{journals/dam/Mahara23}; allocations on graphs by Christodoulou et al.~\cite{conf/sigecom/0001FKS23}; two types of items by Gorantla et al.~\cite{conf/soda/GorantlaMV23} and lexicographic preferences by Hosseini et al.~\cite{conf/aaai/HosseiniSVX21}.
Recently, Prakash et al.~\cite{journals/corr/abs-2410-13580} showed that EFX allocations exist for three types of agents.
For bi-valued instances, Amanatidis et al.~\cite{journals/tcs/AmanatidisBFHV21} and Garg et al.~\cite{journals/tcs/GargM23} showed that EFX and PO allocations can be computed efficiently.
Regarding the approximation of EFX allocations, Chan et al.~\cite{conf/ijcai/Chan00W19} demonstrated the existence of $0.5$-EFX allocations, while Amanatidis et al.~\cite{journals/tcs/AmanatidisMN20} improved the ratio to $0.618$. 
Recently, Amanatidis et al.~\cite{conf/sigecom/AmanatidisFS24} improved this ratio further to $2/3$ for some special cases.

\section{Preliminary} \label{section: preliminary}
We study the problem of fairly allocating a set of $m$ indivisible items (chores), denoted by $M$, to a group of $n$ agents $N$.
We refer to a subset of items $X \subseteq M$ as a bundle.
Each agent $i \in N$ is associated with an additive cost function $c_i: 2^M \to \mathbb{R}^+ \cup \{0\}$, which assigns a cost to every bundle of items.
We use $\bc = (c_1, \dots, c_n)$ to denote the cost functions of all agents.
For any subset $X \subseteq M$ and item $e \in M$, we use $X + e$ to denote $X \cup \{e\}$, and $X - e$ to denote $X \setminus \{e\}$.
For any $M' \subseteq M$, a partition $\bB = (B_1, \dots, B_n)$ is a collection of disjoint subsets of $M'$ such that $B_i \cap B_j = \emptyset$ for all $i \neq j$ and $\bigcup_{i \in N} B_i = M'$. 
An allocation $\bX = (X_1, \dots, X_n)$ is an ordered partition of the item set $M$ into $n$ disjoint bundles, where agent $i$ receives bundle $X_i$. 
In this paper, we distinguish the concepts of partition and allocation. 
A partition consists of $n$ unordered bundles whose assignments (to the agents) have not been decided.
On the other hand, in an allocation, the bundles are ordered and indexed by the agents receiving the bundles.
For any integer $t \geq 1$, we use $[t]$ to denote the set $\{1, \dots, t\}$. 
Throughout the paper, we use $i \in N$ to denote the agents and $i \in [n]$ to index bundles in a partition.
Given an instance $\mathcal{I} = (N, M, \bc)$, our goal is to compute an allocation $\bX$ that is both fair and efficient.

In this work, we focus on instances where each agent has a restricted cost function, which is formally defined as follows.

\begin{definition}[Restricted Functions]
    A set $\{c_1, c_2, \dots, c_n\}$ of cost functions is restricted, if there exists a function $c$ such that for any $i \in N$, any item $e \in M$, we have $c_i(e) \in \{0, c(e)\}$.
\end{definition}

We refer to any instance in which all agents have such cost functions as a restricted instance\footnote{Another natural way to extend the notion of restricted setting to chores is to assume that for every $i \in N$, $e \in M$, we have $c_i(e) \in \{\infty, c(e)\}$, where $\infty$ indicates that the agent cannot receive the item. However, we show that EFX allocations are not guaranteed to exist under this setting in the Appendix~\ref{section: infty}.}.
%
Note that when $c(e) = 1$ for all $e \in M$, restricted instances reduce to binary instances.
Next, we define the fairness notions. 

\begin{definition}[EF]
An allocation $\bX$ is envy-free (EF) if for any agents $i, j \in N$, we have $c_i(X_i) \leq c_i(X_j)$.
\end{definition}

\begin{definition}[EF1]
An allocation $\bX$ is envy-free up to one item (EF1) if for any agents $i, j \in N$, either $X_i = \emptyset$, or there exists an item $e \in X_i$ such that $c_i(X_i - e) \leq c_i(X_j)$.
\end{definition}

\begin{definition}[EFX]
An allocation $\bX$ is envy-free up to any item (EFX) if for any agents $i, j \in N$, either $X_i = \emptyset$, or for any item $e \in X_i$, we have $c_i(X_i - e) \leq c_i(X_j)$.
\end{definition}

\begin{definition}[MMS]
    Given a set of items $M' \subseteq M$, for any $i \in N$, her maximin share (MMS) value of items $M'$ is defined as
    $$
    \MMS_i (M') :=\min_{\bX \in \prod_n(M')} \max_{j \in N}  \ \left\{c_i(X_j)\right\},
    $$
    where $\prod_n(M')$ denotes the set of all $n$-partitions of $M'$.
\end{definition}

For convenience, we use
$\text{MMS}_i$ to denote $\text{MMS}_i(M)$ when $M$ is clear from the context.

\begin{definition}[$\alpha$-MMS]
    An allocation $\bX$ is said to satisfy the $\alpha$-approximate maximin share guarantee for some $\alpha \geq 1$ if, for every agent $i \in N$,
    $$
        c_i(X_i) \leq \alpha \cdot \MMS_i.
    $$
    Specifically, when $\alpha = 1$, the allocation is an \emph{MMS} allocation.
\end{definition}

Note that an $n$-partition of $M$ is called an $\alpha$-MMS partition for agent $i$ if every bundle in the partition has cost at most $\alpha \cdot \MMS_i(M)$.
%
%
For the convenience of describing whether an agent is satisfied with a bundle, we define the concept of feasibility.

\begin{definition}[$\alpha$-MMS-feasible] 
We say that a bundle $B\subseteq M$ is $\alpha$-MMS-feasible to agent $i$ if and only if $c_i(B) \leq \alpha \cdot \MMS_i$. 
\end{definition}

\begin{definition}[EFX-feasible] 
Given a partition $\textbf{B} = (B_1, \dots, B_n)$, a bundle $B_i$ is EFX-feasible to agent $j$ if and only if
\begin{equation*}
    \max_{e \in B_i}\{c_j(B_i - e)\} \leq \min_{k \in [n]} \{c_j(B_k)\}.
\end{equation*}
\end{definition}
Note that whether a bundle $B$ is $\alpha$-MMS-feasible to agent $i$ depends only on agent $i$ and bundle $B$; however, whether $B$ is EFX-feasible also depends on the partition that contains $B$.
For convenience, for agent $i$ and bundle $B$, we further define
\begin{equation*}
    \hat{c}_i(B) = \max_{e \in B}\{c_i(B - e)\}
\end{equation*}
as the bundles cost after removing the item with the smallest cost.

\smallskip

Next, we introduce the efficiency notions.

\begin{definition}[Social Cost]
     The social cost of an allocation $\textbf{X}$ is defined as 
     $sc(\textbf{X}) = \sum_{i \in N}c_i(X_i)$.
\end{definition}

\begin{definition}[Pareto Optimal]
    An allocation $\bX'$ \emph{Pareto dominates} another allocation $\bX$ if $c_i(X_i') \leq c_i(X_i)$ for all $i \in N$ and the inequality is strict for at least one agent. 
    An allocation X is \emph{Pareto optimal} (PO) if $\bX$ is not dominated by any other allocation.
\end{definition}

\section{EFX and MMS Allocations}

In this section, we investigate how to compute allocations that are both EFX and MMS for restricted instances.
The main result of this section is summarized as follows.

\begin{theorem} \label{theorem: EFX}
    For restricted instances of indivisible chores, there exists an algorithm that computes allocations that satisfy both \emph{EFX} and \emph{MMS} to all agents.
\end{theorem}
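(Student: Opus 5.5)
Following the outline in the introduction, I would split $M$ into $M^0=\{e: \exists i,\ c_i(e)=0\}$ and $M^+=M\setminus M^0$. Every agent values $M^+$ identically by $c$. The first step is to partition $M^+$ into $\bB=(B_1,\dots,B_n)$ so that $\bB$ is simultaneously an MMS partition of $M^+$ under $c$ and EFX under $c$. I would take a leximin-optimal partition: minimize $\max_k c(B_k)$, then the second largest bundle cost, and so on. This choice does both jobs.

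\begin{itemize}
\item For MMS: since $c_i\le c$ on $M^0$ and $c_i=c$ on $M^+$, we get $\MMS_i(M)\ge \MMS_c(M^+)$. Indeed, any partition of $M$ restricted to $M^+$ is a partition of $M^+$ with no larger cost for agent $i$. So every bundle $B_k$ satisfies $c(B_k)\le \MMS_i$ for all $i$.
\item For EFX among bundles: if $c(B_k-e)>c(B_j)$, moving $e$ from $B_k$ to $B_j$ strictly improves the leximin order, a contradiction.
\end{itemize}

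Existence of such a partition is by finiteness, which is why the algorithm is not polynomial.

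The second step allocates $M^0$ without fixing who gets which bundle. Agents take turns, say round-robin or some priority order. In each turn the active agent $i$ picks an unallocated item $e$ with $c_i(e)=0$ and adds it to her currently favourite bundle, the one minimizing $c_i$, in the evolving partition. Items in $M^0$ are never added to a bundle in a way that raises $c_i$ for the agent making the choice. Let $\bB'$ be the final partition.

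The third step is the key claim: the bipartite graph between agents and bundles of $\bB'$ has a perfect matching, with an edge $(i,B'_k)$ when $B'_k$ is EFX-feasible to $i$ and $c_i(B'_k)\le \MMS_i$. For MMS, note that any bundle whose $M^0$-items were added by agent $i$ herself has $c_i(B'_k)=c(B_k\cap M^+)\cdots$ only if no other agent's zero items entered it. I would therefore match agent $i$ to a bundle she "touched last" or to her favourite bundle. MMS then follows because $c_i(B'_k)\le c_i(B_k)+$(items others added), and I would argue $i$ only needs a bundle she considers cheapest, whose cost is at most the average $c_i(M)/n\le\MMS_i$. Indeed, her favourite bundle has cost at most $c_i(M)/n\le \MMS_i$. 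Her favourite bundle is also EF-feasible, hence EFX-feasible. So the natural plan is:

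\begin{itemize}
\item Each agent who added at least one item is tied to the bundle she last added to. At the end of the process she must still weakly prefer that bundle, or the process is re-run or extended so that this holds.
\item Agents who added nothing keep EFX-feasibility from the $M^+$ structure. Zero items only lower their relative costs for bundles others filled, except items costing them $c(e)$ that others added. Such additions make other bundles more expensive to them, which only helps their envy toward those bundles.
\end{itemize}

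Hall's condition is verified by showing the set of "active" agents can be matched injectively to distinct favourite bundles. This would use a cycle-reallocation or augmenting-path argument, and then the passive agents are matched to the remaining bundles. Every remaining bundle is EFX-feasible to them, since bundles only grew in their eyes relative to the pure-$c$ EFX partition and their own bundle contains only $M^+$-items plus zero items.

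The main obstacle is the Hall/perfect-matching step: ensuring distinct agents end with distinct favourite bundles, or an injective assignment. I would handle it by refining the turn rule. An agent may only add to a bundle not already "claimed" by another agent whom that bundle is still best for. If her favourite is claimed, she either joins an envy/claim cycle that is resolved by swapping claims, or she adds to the claimed bundle and takes over the claim, with the displaced agent re-choosing. A potential function, such as the total $c$-cost of unassigned $M^0$ items decreasing, would give termination. At the end, claimed bundles give EF (so EFX and MMS) for active agents, and the leximin property of $\bB$ handles unclaimed bundles for passive agents.
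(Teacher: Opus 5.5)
Your first step is fine. A leximin partition of $M^+$ under $c$ is EFX; the exchange argument works because $c(e)>0$ on $M^+$. Each of its bundles costs at most $\MMS_c(M^+)\le\MMS_i$, and the paper notes that such a partition can replace its Phase~1. Your averaging bound (a favourite bundle costs at most $c_i(M)/n\le\MMS_i$) is also correct.

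The gap is the assignment step. You flag it as the main obstacle and then leave it to an unspecified claim-swapping procedure, whose proposed potential does not even decrease when claims are swapped. Moreover, two concrete claims in your plan are false.

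\emph{Active agents need not have distinct favourites.} Let $n=2$ and $M^+=\{x,y\}$ with $c(x)=10$, $c(y)=20$. Let $a$ be free for agent~1 and cost $1$ to agent~2, and let $b$ be free for agent~2 and cost $1$ to agent~1. Both agents must add an item. However $a$ and $b$ are placed, to each agent the bundle containing $x$ costs at most $11$ and the one containing $y$ at least $20$. So both active agents strictly prefer the same bundle, and one of them must receive the non-favourite bundle containing $y$.

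\emph{A passive agent's bundle can contain items costly to her.} A bundle modified by others can be left over. Take $M^+=\{x_1,x_2\}$ with $c(x_1)=c(x_2)=10$, and four items free for agent~1 but costing $1$ to agent~2. Breaking ties, agent~1 may put three of them into $\{x_1\}$ and the fourth into $\{x_2\}$, so under your rule she is tied to the latter. Agent~2 then receives a bundle of cost $13$, which still costs $12>11$ after removing one of the cheap items, so EFX fails.

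What is missing is an invariant guaranteeing that the bundles left for unmatched agents are \emph{untouched}. The paper obtains it by letting each agent act exactly once. In her turn she adds \emph{all} her still-unallocated zero-cost items to her current favourite bundle (an empty addition still counts as a modification). Every modified bundle is then given to its \emph{last} modifier.

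\begin{itemize}
\item \textbf{Last modifiers.} The bundle was her favourite when she modified it, and the additions cost her nothing. Nobody changes it afterwards, and all other bundles only gain items. So it is still her favourite at the end: she is EF, hence EFX, and MMS by your averaging bound.
\item \textbf{Remaining agents.} Every agent modifies exactly one bundle, so distinct modified bundles have distinct last modifiers. Hence the number of remaining agents equals the number of untouched bundles. These contain only $M^+$-items, so their cost is unchanged while every other bundle's cost has only increased. They are therefore EFX- and MMS-feasible to everyone.
\end{itemize}

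In the first example this gives $\{x,a,b\}$ to agent~2 and the untouched $\{y\}$ to agent~1. Your one-item-at-a-time dynamics could be repaired along the same lines: assign each modified bundle to its last modifier, and break ties among favourites by a fixed index order, which prevents an agent from being the last modifier of two bundles. Some such injectivity argument is exactly what your proposal lacks.
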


The remainder of this section is devoted to proving the above theorem.
To introduce our algorithm, we first classify the set of chores into two categories, as defined below.

\begin{definition}
    Chores are categorized as:
    \begin{itemize}
        \item $M^+ = \{e \in M: c(e) > 0 \text{ and } \forall i \in N, c_i(e) = c(e)\}$, which represents the set of items that have a consistent cost across all agents.
        \item $M^0 = M \setminus M^+$ denotes the set of remaining items, where each item incurs zero cost for at least one agent.
        Furthermore, for each agent $i \in N$, we use $M_i^0 = \{e \in M : c_i(e) = 0\}$ to denote the set of items that incur zero cost to agent $i$.
        We have $M^0 = \bigcup_{i\in N} M_i^0$.
    \end{itemize}
\end{definition}

For convenience, we refer to items in $M^0_i$ as \emph{zero-cost} items for agent $i$. 
We note that, by definition, items in $M^+$ induce an instance where all agents have identical cost functions. 
For such instances, EFX allocations can be computed efficiently.
Additionally, there may be some overlap between $M_i^0$ and $M_j^0$ for two agents $i$ and $j$, meaning that an item could be a zero-cost item for multiple agents.

\subsection{Description of the Algorithm}

Our algorithm has three main phases (see Algorithm~\ref{alg:restricted} for details).

\paragraph{Phase 1: Allocate items in $M^+$.}
We first compute an MMS partition $\bB^+ = (B_1^+,\ldots,B_n^+)$ with respect to (w.r.t.) $c$ over items in $M^+$. 
Then, as long as there exists any bundle that is not EFX-feasible w.r.t. $c$, we perform a reallocation to balance the bundles without violating the MMS-feasibility.
We repeat the procedure until every bundle is EFX-feasible. 
We show in Lemma~\ref{lemma: Phase1M+EFX} that Phase 1 terminates after a finite number of steps and returns a partition in which every bundle $B' \in \bB^+$ is both EFX-feasible and MMS-feasible to any agent.

\paragraph{Phase 2: Allocate items in $M^0$.}
In this phase, we allocate items in $M^0$, and turn the partition $\bB^+$ into $\bB = (B_1,\ldots,B_n)$.
During this phase, agents take turns to add the unallocated zero-cost items to their most preferred bundle. 
Specifically, in agent $i$'s turn, the items in $M_i^0$ that have not been allocated are added to the bundle $B_j$ that has the minimum $c_i(B_j)$.
We say that agent $i$ \emph{modifies} bundle $B_j$.
Note that it is possible that all items in $M_i^0$ have already been allocated.
However, in this case, we still consider it a modification.
Throughout Phase $2$, we maintain a vector $L = (L_1, \dots, L_n)$ to keep track of the \emph{last} agent that has modified a bundle.
Specifically, we initialize $L = (0, \dots, 0)$, and update $L_j = i$ when agent $i$ modifies $B_j$.
Therefore, $L_j = i\in N$ indicates that agent $i$ is the last agent that has modified bundle $B_j$;
$L_j = 0$ indicates that bundle $B_j$ is not modified by any agent during Phase $2$, and we have $B_j = B_j^+$.

At the end of Phase $2$, we show the following: 
\begin{itemize}
    \item For any bundle $B_j \in \bB$ such that $L_j \neq 0$, it is MMS-feasible and EFX-feasible to agent $L_j$ (see Lemma~\ref{lemma: M0EFX}). 
    \item For any bundle $B_j \in \bB$ with $L_j = 0$, which remains unchanged during this phase, it is MMS-feasible and EFX-feasible to all agents (see Lemma~\ref{lemma: M+EFX}).
\end{itemize}

\paragraph{Phase 3: Compute the final allocation.}
Finally, in this phase, we decide the assignment of bundles and turn the partition $\bB$ into an allocation $\bX = (X_1,\ldots,X_n)$.
For any bundle $B_j \in \bB$ with $L_j \neq 0$, we assign $B_j$ to agent $L_j$. 
As we will show below, $B_j$ is the most preferred bundle for agent $L_j$ and therefore agent $L_j$ is EF towards all other agents.
For the remaining bundles (with $L_j = 0$), we allocate them to the remaining agents (who have not received any bundle) arbitrarily.
Since these bundles only contain items from $M^+$, we use Lemma~\ref{lemma: M+EFX} to show that the bundles are MMS-feasible and EFX-feasible to their receivers.

\medskip

We illustrate the execution of our algorithm in Figure~\ref{Fig: Algorithm}, where we show the modifications of agents to the bundles at the end of Phase $2$.
We use a blue edge between an agent and a bundle to represent a modification, where only the last agent's edge is solid. 

\begin{figure}[htb]
\centering
\begin{tikzpicture}[scale=0.5, transform shape, >=Stealth, node distance=2.5cm and 1.2cm]
\node (b1) [rectangle, draw, minimum width=1.2cm, minimum height=1.2cm, font=\large] {B\textsubscript{1}};
\node (b2) [rectangle, draw = blue, thick, right=of b1, minimum width=1.2cm, minimum height=1.2cm, font=\large] {B\textsubscript{2}};
\node (b3) [rectangle, draw, right=of b2, minimum width=1.2cm, minimum height=1.2cm, font=\large] {B\textsubscript{3}};
\node (b4) [rectangle, draw = blue, thick, right=of b3, minimum width=1.2cm, minimum height=1.2cm, font=\large] {B\textsubscript{4}};
\node (b5) [rectangle, draw, right=of b4, minimum width=1.2cm, minimum height=1.2cm, font=\large] {B\textsubscript{5}};
\node (b6) [rectangle, draw=blue, thick, right=of b5, minimum width=1.2cm, minimum height=1.2cm, font=\large] {B\textsubscript{6}};

\node (i1) [circle, draw, below=of b1, minimum size=1.2cm, font=\large] {1};
\node (i2) [circle, draw, below=of b2, minimum size=1.2cm, font=\large] {2};
\node (i3) [circle, draw, below=of b3, minimum size=1.2cm, font=\large] {3};
\node (i4) [circle, draw, below=of b4, minimum size=1.2cm, font=\large] {4};
\node (i5) [circle, draw, below=of b5, minimum size=1.2cm, font=\large] {5};
\node (i6) [circle, draw, below=of b6, minimum size=1.2cm, font=\large] {6};

\draw[dashed, thick, blue] (i5) -- (b6);
\draw[dashed, thick, blue] (i1) -- (b2);
\draw[dashed, thick, blue] (i2) -- (b2);

\draw[-, thick, blue] (i4) -- (b4);
\draw[-, thick, blue] (i3) -- (b2);
\draw[-, thick, blue] (i6) -- (b6);
\end{tikzpicture}
\caption{Illustration for the modifications by the end of Phase 2. \label{Fig: Algorithm}
}
\end{figure}
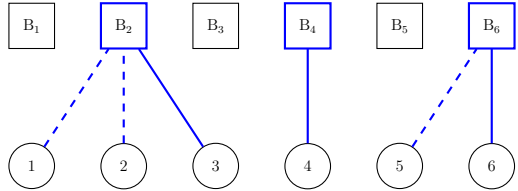

\begin{algorithm}[!htb]
\caption{Computation of EFX and MMS Allocations for Restricted Instances}
\label{alg:restricted}
\KwIn{Restricted instance $I = (N, M, c, \{c_i\}_{i \in N}$)}
    Initialize $B_i \gets \emptyset$ for all $i \in [n]$; \\
    \tcp{Phase 1: Allocate items in $M^+$} 
    Compute an MMS partition $\textbf{B} = (B_1, \dots, B_n)$ of $M^+$ w.r.t. $c$; \\
    \tcp{Modify the partition to ensure EFX\footnotemark};
    \While{$\max_{B \in \bB} \left\{\hat{c}(B)\right\} > \min_{B \in \mathbf{B}} \{c(B)\}$ \label{While-loop: EFX}}{ 

    $B_b \gets \arg\max_{B \in \bB} \left\{\hat{c}(B)\right\}$; \\
    $B_l \gets \arg\min_{B \in \mathbf{B}} \{c(B)\}$; \\
    $e \gets \arg \min_{e' \in B_b} \{c(e')\}$;\\
    $B_l \gets B_l + e$;\\    
    $B_b \gets B_b - e$;\\
    }
    \tcp{Phase 2: Allocate items in $M^0$} 
    $R \leftarrow M^0$; \hfill \tcp{Remaining items}
    For all $i \in [n]$, initialize $L_i \gets 0$\; 
    \For{$i \in N$}{
        $S_i \leftarrow M_i^0 \cap R$; \\
        $B_j \leftarrow \arg \min_{B \in \bB} \{c_i(B)\}$; \hfill \tcp{Break ties arbitrary} 
        $B_j \leftarrow B_j \cup S_i$; \\
        $R \leftarrow R \setminus S_i$;\\
        $L_j \leftarrow i$; \\
    }
    \tcp{Phase 3: Compute the final allocation}
    \For{$i \in [n]$}{ \label{Initial allocation}
        \If{$L_i \neq 0$}{
            $X_{L_i} \leftarrow B_i$; \\
            $N \leftarrow N \setminus \{L_i\}$; \\
            $\bB \leftarrow \bB \setminus \{ B_i \}$; \\
        }
    }
    \For{$i \in N$}{ \label{final allocation} 
        Pick arbitrary bundle $B \in \bB$; \\
        $X_i \leftarrow B$; \\
        $\bB \leftarrow \bB \setminus \{ B\}$; \\
    }
\KwOut{EFX and MMS allocation $\bX$}
\end{algorithm}
\footnotetext{Recall that $\hat{c}_i(B) = \max_{e \in B}\{c_i(B - e)\}$}
\subsection{Fairness Guarantee} \label{Section: Fairness}

Now we prove the fairness guarantee of the returned allocation.
We first show that the partition $\bB^+$ computed in Phase $1$ has strong fairness guarantees. 
Recall that in this phase, we only consider items in $M^+$, which induce an instance with identical cost functions.
Consequently, for any $ i, j \in N$, we have $\text{MMS}_i(M^+) = \text{MMS}_j(M^+)$.

\begin{lemma} \label{lemma: Phase1M+EFX}
    For any agent $i \in N$, every bundle $B \in \textbf{B}^+$ is MMS-feasible and EFX-feasible.
\end{lemma}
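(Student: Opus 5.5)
Since every agent's cost restricted to $M^+$ is the common function $c$, it suffices to prove that the partition $\bB^+$ returned by Phase~1 satisfies $c(B)\le \MMS_i(M^+)$ for every $B\in\bB^+$ (the common value; note $\MMS_i(M^+)\le \MMS_i(M)$ only matters later), and that $\hat c(B)\le \min_{B'\in\bB^+} c(B')$ for every $B$, which is exactly EFX-feasibility for every agent. I would establish three facts about the while-loop on line~\ref{While-loop: EFX}: (i) MMS-feasibility is an invariant, (ii) the loop terminates, and (iii) upon termination EFX-feasibility holds, the last being immediate from the loop condition.

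For (i), I would argue by induction over iterations. Initially $\bB$ is an MMS partition, so $\max_B c(B)=\MMS(M^+)$. In an iteration with $B_b$, $B_l$ and $e$ as in the algorithm, the loop condition gives $c(B_b)-c(e)=\hat c(B_b) > c(B_l)$, so the new cost of $B_l$ is $c(B_l)+c(e) < c(B_b)\le \MMS(M^+)$. The bundle $B_b$ only decreases, and the others are unchanged, so every bundle stays MMS-feasible. (Note $B_b\ne B_l$, since $\hat c(B_l)\le c(B_l)$.)

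For (ii), which I expect to be the main step, I would use a potential function. Since the multiset of bundle costs changes by replacing $(c(B_b),c(B_l))$ with $(c(B_b)-c(e),c(B_l)+c(e))$, where both new values lie strictly between $c(B_l)$ and $c(B_b)$ (because $c(e)>0$ and $c(B_l)+c(e)<c(B_b)$), the sum of squares $\Phi=\sum_B c(B)^2$ strictly decreases. Indeed, $\Delta\Phi = 2c(e)\bigl(c(B_l)-c(B_b)+c(e)\bigr)<0$. Since there are finitely many partitions of $M^+$, $\Phi$ takes finitely many values, so no partition repeats and the loop halts after finitely many iterations. Alternatively, sorted cost vectors decrease lexicographically, but the squared potential is cleaner.

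At termination, $\max_B \hat c(B)\le\min_B c(B)$, i.e.\ every bundle is EFX-feasible with respect to $c$, hence for every agent $i$, since $c_i=c$ on $M^+$. Combined with (i), this proves the lemma.
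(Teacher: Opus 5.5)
Your proof is correct and follows the paper's argument: MMS-feasibility is preserved because the loop condition gives $c(B_l+e)<c(B_b)$, termination follows from the strictly decreasing potential $\sum_{B} c(B)^2$, and EFX-feasibility for every agent is the exit condition of the loop, since $c_i=c$ on $M^+$. Your remark that finitely many partitions of $M^+$ exist makes the termination step more rigorous than the paper's appeal to a bounded, strictly decreasing potential.
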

\begin{proof}
    Recall that we start with an MMS partition of the items $M^+$, and perform reallocations until all bundles in the partition are EFX-feasible.
    Therefore, it suffices to show that the algorithm terminates after a finite number of steps, and the MMS-feasibility of the bundles is maintained throughout the reallocations.

    In the following, we fix an arbitrary agent $i \in N$.
    By the definition of MMS partition, at the beginning of Phase $1$, for any bundle $B\in \bB^+$ we have
    \begin{equation*}
        c_i(B) = c(B) \leq \text{MMS}_i(M^+) \leq \text{MMS}_i(M),
    \end{equation*}
    where the second inequality follows from the fact that $M^+ \subseteq M$. 
    Therefore, to show that all bundles are MMS-feasible to agent $i$ at the end of Phase $1$, it suffices to show that the reallocations do not increase $\max_{j\in [n]}\{ c_i(B_j) \}$.

    Consider any reallocation, in which item $e$ is reallocated from bundle $B_b$ to $B_l$.
    By the condition of the while-loop, we have $c_i(B_b - e) > c_i(B_l)$.
    Thus, we have:
    \begin{equation*}
        c_i(B_l + e) = c(B_l + e) < c(B_b) = c_i(B_b).
    \end{equation*}

    Therefore, after the reallocation, the cost of the two new bundles is upper bounded by $c_i(B_b)$, which implies that $\max_{j\in [n]}\{ c_i(B_j) \}$ does not increase after the reallocation.

    It remains to prove that the while-loops terminate after a finite number of rounds. 
    We show this by introducing a potential function $\sum_{j\in[n]} \left( c(B_j) \right)^2$.
    Clearly, the potential function is upper bounded by $\left( c(M^+) \right)^2$ and lower bounded by $0$.
    Each reallocation of item $e$ from $B_b$ to $B_l$ decreases this potential by
    \begin{align*}
        &\ \left( \hat{c}(B_b) + c(e) \right)^2 + \left( c(B_l) \right)^2 - \left( \hat{c}(B_b) \right)^2 - \left( c(B_l) + c(e) \right)^2 \\
        \geq &\ 2\cdot c(e) \cdot \left( \hat{c}(B_b) - c(B_l) \right) > 0.
    \end{align*}

    Therefore, the potential strictly decreases, implying that Phase $1$ will terminate after a finite number of reallocations.
\end{proof}

\paragraph{Remark.}
Note that we can obtain the same fairness guarantee (for EFX and MMS) if we replace $\bB^+$ by a Leximin allocation of $M^+$\footnote{Here, Leximin refers to the Leximin++ allocation introduced by Plaut and Roughgarden~\cite{journals/siamdm/PlautR20}.}.
However, showing that the Leximin allocation ensures EFX and MMS requires a proof that is almost identical to what we presented above.
Furthermore, while our algorithm is not guaranteed to terminate in polynomial time, it is constructive, which is slightly stronger than the existential guarantee given by Leximin allocations.

\medskip

Next, we analyze Phase $2$. 
In the following two lemmas, we show that if a bundle is modified during this phase, it will be MMS-feasible and EFX-feasible to the agent who performs the last modification, at the end of Phase $2$; if it is not modified by any agent, then it is MMS-feasible and EFX-feasible to all agents.

\begin{lemma} \label{lemma: M0EFX}
    For any bundle $B_j \in \bB$ with $L_j = i \neq 0$, $B_j$ is MMS-feasible and EFX-feasible to agent $i$. 
\end{lemma}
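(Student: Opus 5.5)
Fix a bundle $B_j$ with $L_j = i \neq 0$, and consider the moment agent $i$ performs her modification of $B_j$ in Phase 2. The plan is to show that from this moment until the end of Phase 2, $B_j$ remains agent $i$'s cheapest bundle (weakly). EFX-feasibility and MMS-feasibility for $i$ then both follow from comparing $B_j$ with the other bundles.

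\textbf{Step 1: $B_j$ stays $i$'s favourite.} At the start of $i$'s turn, $B_j \in \arg\min_{B\in\bB} c_i(B)$ by the choice in the algorithm. Agent $i$ adds only items of $M_i^0$, so $c_i(B_j)$ does not change. After $i$'s turn, $i$ is the last agent to modify $B_j$, so $B_j$ never changes again. Other bundles can only gain items, and adding items never decreases $c_i$. Hence at the end of Phase 2 we have $c_i(B_j) \le c_i(B_k)$ for every $k$. Since $\hat c_i(B_j) \le c_i(B_j)$, the bundle $B_j$ is EFX-feasible, and in fact envy-free, for $i$.

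\textbf{Step 2: MMS-feasibility.} By Step 1, $c_i(B_j) = \min_k c_i(B_k)$ at the end of Phase 2. At that point the bundles of $\bB$ partition all of $M$: Phase 1 allocates $M^+$, and Phase 2 allocates all of $M^0 = \bigcup_i M_i^0$, because every agent takes a turn and takes every remaining item of her $M_i^0$. The minimum of $n$ numbers is at most their average, so
\[
c_i(B_j) \le \frac{c_i(M)}{n} \le \MMS_i(M),
\]
where the second inequality holds because in any $n$-partition the largest bundle costs at least the average.

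\textbf{Where the care is needed.} The only subtle point is Step 1: we must be sure that no later action can raise $c_i(B_j)$. This holds because $i$ is the last modifier of $B_j$, so $B_j$ is frozen after her turn. Earlier modifications of $B_j$ by other agents happened before $i$'s turn, so they are already reflected in $i$'s choice of $B_j$ as her minimum. Monotonicity of the additive costs $c_i$ takes care of all the other bundles.
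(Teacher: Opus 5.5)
Your proof is correct. Step 1 is the same as the paper's argument: $B_j$ is $i$'s favourite at her turn, her own zero-cost items leave $c_i(B_j)$ unchanged, $B_j$ is frozen afterwards, and the other bundles only grow. Your MMS step, however, takes a different route.

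The paper goes through Phase 1. At agent $i$'s turn at most $n-1$ agents have moved, so some bundle is still an untouched $B_k^+$. This gives $c_i(B_j) \le c(B_k^+) \le \max_k c(B_k^+) \le \MMS_i(M^+) \le \MMS_i$, which relies on the MMS guarantee of $\bB^+$ from Lemma~\ref{lemma: Phase1M+EFX}.

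You instead note that after Phase 2 the bundles partition all of $M$, since $M^0 = \bigcup_i M_i^0$ and every agent takes all remaining items of $M_i^0$. Envy-freeness then yields proportionality, $c_i(B_j) \le c_i(M)/n \le \MMS_i$.

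What each route buys:
\begin{itemize}
\item Yours uses nothing about the Phase 1 partition. It therefore shows that in the polynomial-time variant (Algorithm~\ref{alg:load-balancing}) the agents $L_j \neq 0$ get exact MMS, and only the bundles with $L_j = 0$ incur the $4/3$ loss.
\item The paper's chain reuses the same bound as for the untouched bundles and does not need the final partition to cover all of $M^0$, but it does need the MMS-feasibility of $\bB^+$.
\end{itemize}
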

\begin{proof}
    The result follows from the design of Phase $2$.
    In Phase 2, each agent modifies her favorite bundle in turn. 
    By definition, $L_j$ is the last agent who has modified bundle $B_j$.
    Therefore, bundle $B_j$ is agent $i$’s favorite bundle before the modification takes place.
    Furthermore, the modification does not increase the cost of $B_j$ w.r.t. $c_i$. 
    From that moment on, the costs of all other bundles (w.r.t. $c_i$) do not decrease during this phase.
    Consequently, $B_j$ remains agent $i$'s favorite bundle at the end of Phase $2$.
    Therefore, we have
    \begin{equation*}
        c_i(B_j) \leq \max_{k\in [n]} \left\{ c_i(B^+_k) \right\} \leq \text{MMS}_i(M^+) \leq \text{MMS}_i,
    \end{equation*}
    where the first inequality holds since there is at least one bundle that is not modified before agent $i$ performs the modification.
\end{proof}


\begin{lemma} \label{lemma: M+EFX}
    For bundle $B_j \in \bB$ with $L_j = 0$, $B_j$ is MMS-feasible and EFX-feasible to every agent $i \in N$.
\end{lemma}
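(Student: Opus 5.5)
Since $L_j = 0$, no agent modified $B_j$ in Phase 2, so $B_j = B_j^+$ consists only of items in $M^+$. Hence $c_i(B_j) = c(B_j)$ for every agent $i$, and this cost is unchanged over Phase 2. The plan is to compare $B_j$ with the other bundles of the final partition $\bB$, using only that Phase 2 adds items and never removes them.

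\emph{MMS-feasibility.} By Lemma~\ref{lemma: Phase1M+EFX}, $B_j^+$ is MMS-feasible to every agent, so
\[
c_i(B_j) = c(B_j^+) \le \MMS_i(M^+) \le \MMS_i ,
\]
where the last inequality is monotonicity of the MMS value in the item set, as already used in the proof of Lemma~\ref{lemma: Phase1M+EFX}.

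\emph{EFX-feasibility w.r.t.\ $\bB$.} We must show $\hat{c}_i(B_j) \le \min_{k\in[n]} c_i(B_k)$ for every agent $i$, where the minimum is over the final bundles. Since $B_j \subseteq M^+$, we have $\hat{c}_i(B_j) = \hat c(B_j^+)$. By Lemma~\ref{lemma: Phase1M+EFX}, $B_j^+$ is EFX-feasible in $\bB^+$ w.r.t.\ $c$, so $\hat c(B_j^+) \le \min_k c(B_k^+)$.

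It remains to relate $c(B_k^+)$ to $c_i(B_k)$. Each $B_k$ is $B_k^+$ plus possibly some items of $M^0$, so $c_i(B_k) \ge c_i(B_k^+) = c(B_k^+)$, because $B_k^+ \subseteq M^+$ and costs are nonnegative. Chaining the inequalities gives
\[
\hat{c}_i(B_j) \le \min_k c(B_k^+) \le \min_k c_i(B_k),
\]
which is exactly EFX-feasibility of $B_j$ to agent $i$.

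The main subtlety is to compare against the \emph{final} minimum bundle cost rather than the Phase 1 one. This works because Phase 2 only enlarges bundles, so bundle costs can only rise, and because $B_j$ contains no zero-cost items for $i$, so the removed item's cost is counted in the $M^+$ cost structure. The case $B_j = \emptyset$ is trivially EFX-feasible and needs no separate argument.
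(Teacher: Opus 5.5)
Your proof is correct and follows essentially the same route as the paper. You apply Lemma~\ref{lemma: Phase1M+EFX} to $B_j = B_j^+$ at the start of Phase~2, observe that $B_j$ is never touched, and use that Phase~2 only adds items, so every other bundle's cost under $c_i$ can only rise; you simply make explicit the details the paper leaves implicit, namely that $c_i$ and $c$ agree on $M^+$ (so $\hat{c}_i(B_j)=\hat{c}(B_j^+)$) and the MMS monotonicity step $\MMS_i(M^+)\le \MMS_i$.
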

\begin{proof}
    By Lemma~\ref{lemma: Phase1M+EFX}, every bundle $B_j \in \bB$ with $L_j = 0$ is MMS-feasible and EFX-feasible to every agent $i \in N$ at the beginning of Phase $2$.
    Since $L_j = 0$, bundle $B_j$ is not modified during Phase $2$. 
    Since the cost of each bundle does not decrease during Phase $2$, $B_j$ continues to satisfy the fairness guarantee for every agent $i \in N$ at the end of Phase $2$.
\end{proof}

Finally, we analyze the allocation $\bX$ computed in Phase $3$.
Given Lemma~\ref{lemma: M0EFX} and Lemma~\ref{lemma: M+EFX}, we know that in the final allocation, for all $i \in N$, the bundle $X_i$ assigned to agent $i$ is MMS-feasible and EFX-feasible to agent $i$. 
Therefore, the final computed allocation is MMS and EFX.

\subsection{Polynomial-Time Algorithm}
\label{ssec:poly-time-alg}

In this section, we present a polynomial-time algorithm that computes EFX and $\frac{4}{3}$-MMS allocations (See Algorithm~\ref{alg:load-balancing}).

Note that the main bottleneck in the time complexity of Algorithm~\ref{alg:restricted} lies in Phase $1$\footnote{For identical valuations, the MMS partition always exists. However, computing such a partition is NP-hard.}.
To overcome this, we introduce an alternative method to compute the initial partition in polynomial time, allowing us to replace Phase $1$ while preserving the correctness and structure of Phase $2$ and Phase $3$. 
In particular, Barman and Krishnamurthy~\cite{journals/teco/BarmanK20} and Aziz et al.~\cite{journals/ai/AzizLMWZ24} proved that for IDO instances, the envy-cycle elimination algorithm yields an allocation that is $4/3$-MMS and EFX, respectively. Given that agents have identical valuations in Phase 1, their algorithm becomes equivalent to our Phase 1 procedure, proceeding without any cycle rotation.
\begin{algorithm}[!htb]
\caption{Computation of EFX and $4/3$-MMS Allocations for Restricted Instances} \label{alg:load-balancing}
\KwIn{Restricted instance $I = (N, M, c, \{c_i\}_{i \in N}$)}
$m \gets |M^+|$; \\
Initialize $B_i \gets \emptyset$ for all $i \in [n]$; \\
Sort items in $M^+$ in decreasing order w.r.t. $c$: $\{e_1, \dots, e_m\}$; \\
\For{$j \in [m]$}{
$B_l \gets \arg \min_{B \in \bB} \{c(B)\}$; \\
$B_l \gets B_l + {e_j}$; \\
}
Run Phases $2$ and $3$ on $\bB = (B_1, \dots, B_n)$ to compute allocation $\bX$; \\
\KwOut{EFX and $4/3$-MMS allocation $\bX$}
\end{algorithm}

In the following, we prove the fairness guarantee for the partition $\bB$ computed in Algorithm~\ref{alg:load-balancing}.  
\begin{lemma}
    For any agent $i \in N$, every bundle $B \in \textbf{B}$ is  EFX-feasible and $c_i(B) \leq \frac{4}{3} \cdot \text{MMS}_i(M^+)$.
\end{lemma}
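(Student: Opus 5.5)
Since every agent has cost $c$ on $M^+$, we can drop the agent index and work with identical costs: it suffices to show that the partition $\bB$ from the greedy loop of Algorithm~\ref{alg:load-balancing} (LPT scheduling) satisfies $\hat{c}(B) \le \min_{B'\in\bB} c(B')$ for every $B\in\bB$, and $c(B)\le \frac43\MMS(M^+)$, where $\MMS(M^+)$ is the common value $\MMS_i(M^+)$. The plan is to prove both properties by tracking invariants along the loop, with items $e_1,\dots,e_m$ processed in non-increasing order of $c$.

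\textbf{EFX.} We show by induction on $j$ that after $e_j$ is inserted, every bundle $B$ satisfies $\hat{c}(B)\le \min_{B'} c(B')$. Let $e$ be the last item added to a nonempty bundle $B$. Because items arrive in non-increasing cost order, $e$ is a minimum-cost item of $B$, so $\hat{c}(B)=c(B-e)$. At the moment $e$ was added, $B-e$ was a minimum-cost bundle. Bundle costs never decrease afterwards, so $c(B-e)$ is at most the cost of every bundle at every later time, which gives $\hat{c}(B)\le\min_{B'}c(B')$ at the end. This holds for every bundle, so every bundle is EFX-feasible with respect to $c$, and hence with respect to each $c_i$ restricted to $M^+$.

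\textbf{$4/3$-MMS.} Here we adapt Graham's LPT analysis. Let $B$ be a bundle of maximum final cost, let $e_j$ be its last item, and let $\mu=\MMS(M^+)$.
\begin{itemize}
\item If $c(e_j)\le \mu/3$: when $e_j$ was added, $B-e_j$ had minimum cost, so $c(B-e_j)\le c(M^+)/n\le \mu$. Hence $c(B)\le \frac43\mu$.
\item If $c(e_j)>\mu/3$: we truncate the instance to $\{e_1,\dots,e_j\}$. The greedy run on the prefix is identical, $B$'s cost is unchanged, and $\MMS$ of the prefix is at most $\mu$. In the prefix every item has cost greater than $\mu/3$, so every bundle of an MMS partition contains at most two items, and $j\le 2n$.
\end{itemize}
For such instances the standard argument shows that LPT is optimal. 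If $j\le n$, every item sits alone in its bundle, so $c(B)=c(e_j)\le\mu$. If $n<j\le 2n$, LPT places $e_{n+1},\dots,e_j$ onto the singleton bundles in reverse order, pairing $e_{n+k}$ with $e_{n+1-k}$ (ties are harmless). An exchange argument then shows this pairing minimizes the maximum pair sum over all partitions into bundles of size at most two, so $c(B)\le\mu$.

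We expect the main obstacle to be making this second case rigorous: verifying that LPT yields exactly the reverse pairing, and proving that this pairing is optimal. The optimality step uses the fact that any optimal partition can be transformed, via swaps that do not increase the maximum, into the sorted reverse pairing. Alternatively, we can simply cite the $4/3$ LPT bound for identical machines, as in Barman and Krishnamurthy~\cite{journals/teco/BarmanK20}.
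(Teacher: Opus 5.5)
Your proof is correct. The EFX half is the paper's argument: the last item placed in a bundle is its cheapest item, the bundle without it was a minimum-cost bundle at that moment, and bundle costs only grow.

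For the $\frac43$ bound the routes differ. The paper only identifies the loop with LPT scheduling on identical machines, with $\MMS(M^+)$ as the optimal makespan, and cites Graham's $(\frac43-\frac{1}{3n})$ guarantee. You instead rebuild Graham's proof: you split on whether the last item $e_j$ of a maximum-cost bundle exceeds $\mu/3$, and in the hard case you truncate and argue that LPT is optimal when every MMS bundle holds at most two items. This makes the lemma self-contained; the lost $-\frac{1}{3n}$ term does not matter here.

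The step you flag does need care. The reverse-pairing structure is not a property of LPT alone. For example, $e_{n+2}$ lands on $\{e_{n-1}\}$ only if $c(e_{n-1})\le c(e_n)+c(e_{n+1})$, and this has to be derived from the existence of a partition with at most two items per bundle and cost at most $\mu$.

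You can avoid both the structure claim and the exchange argument:
\begin{itemize}
\item When $e_j$ is added with $n<j\le 2n$, every bundle is nonempty, since costs on $M^+$ are positive and LPT fills empty bundles first. A count of the $j-1$ placed items shows at least $2n+1-j$ bundles are singletons. The cheapest of these costs at most $c(e_{2n+1-j})$, so $c(B-e_j)\le c(e_{2n+1-j})$.
\item Now take any $n$-partition of $\{e_1,\dots,e_j\}$ into bundles of at most two items, all of cost below $c(e_{2n+1-j})+c(e_j)$. Each of $e_1,\dots,e_{2n+1-j}$ would have to be alone. That leaves $j-n-1$ bundles for $2j-2n-1$ items, which is impossible.
\end{itemize}
Hence $c(B)\le c(e_{2n+1-j})+c(e_j)\le\mu$.
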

\begin{proof}
We first show that each bundle $B \in \bB$ is EFX-feasible with respect to all agents. 
In each iteration of Algorithm~\ref{alg:load-balancing}, we select the bundle $B_l$ with the least total cost and assign to it the largest remaining item $e$ (since items are sorted in non-increasing order of cost).  
Let $B$ be any other bundle such that $B \neq B_l$. Then we have:
\[
\hat{c}(B_l + e) = c(B_l) \leq c(B),
\]
where the equality holds because $c(e) \leq c(e')$ for all $e' \in B_l$, and the inequality follows from the choice of $B_l$ as the minimum-cost bundle in that iteration.  
This implies that bundle $B_l$ remains EFX-feasible after adding item $e$.
Moreover, for any other bundle $B \neq B_l$, items in $B$ remain unchanged during this step.
Thus, the EFX-feasibility of $B$ is preserved.  
Therefore, all bundles remain EFX-feasible throughout the execution of the algorithm.

Next, we show that $c(B) \leq \frac{4}{3} \cdot \text{MMS}_i(M^+)$.
In fact, our algorithm can be naturally interpreted as an instance of the classical job scheduling problem, where $m$ items correspond to jobs and $n$ bundles correspond to identical machines.
In this setting, our approach is equivalent to applying the Longest Processing Time First (LPT) algorithm, which guarantees a makespan no greater than $\left(\frac{4}{3} - \frac{1}{3n}\right) \cdot \text{OPT}$~\cite{graham1966bounds}.
Therefore, each bundle incurs a cost of at most $\frac{4}{3} \cdot \text{MMS}_i(M^+)$.
\end{proof}

We note that the partition computed in Phase $1$ can be replaced by $\bB$, after which the algorithm proceeds directly to Phases $2$ and $3$. 
The fairness guarantees are preserved throughout Phases $2$ and $3$, following similar arguments to those presented in Section~\ref{Section: Fairness}.
Therefore, the final allocation is EFX and $\frac{4}{3}$-MMS.

We now analyze the overall time complexity of Algorithm~\ref{alg:load-balancing}.
The construction of the initial EFX and $\frac{4}{3}$-MMS partition involves exactly $m$ iterations.
In each iteration, identifying the least costly bundle takes $O(n)$ time, resulting in a total time complexity of $O(nm)$.
In Phase $2$, the for-loop executes exactly $n$ iterations. 
During each iteration, the item set $S_i$ and the bundle $B_j$ can be identified in $O(m)$ and $O(n)$ time, respectively. 
The updates to $B_j$ and $R$ require $O(m)$ time.
In Phase $3$, the final allocation can be computed in $O(n)$.
Therefore, the entire algorithm runs in polynomial time.
Thus, we obtain the following theorem.
\begin{theorem}
    For restricted instances of indivisible chores, there exists an algorithm that computes allocations that satisfy both \emph{EFX} and \emph{$\frac{4}{3}$-MMS} to all agents in polynomial time.
\end{theorem}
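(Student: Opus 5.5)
The plan is to run Algorithm~\ref{alg:load-balancing}: replace Phase~1 by the LPT-style partition $\bB$ of $M^+$, then apply Phases~2 and~3 unchanged. Correctness will follow by re-running the arguments of Section~\ref{Section: Fairness} with the weaker MMS guarantee. The starting point is the preceding lemma. For every agent $i$, every bundle of the LPT partition $\bB$ is EFX-feasible w.r.t.\ $c$, which equals $c_i$ on $M^+$, and satisfies $c_i(B)\le \frac43\MMS_i(M^+)\le \frac43\MMS_i$, using monotonicity of MMS in the item set ($M^+\subseteq M$). This lemma plays the role of Lemma~\ref{lemma: Phase1M+EFX}.

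Next I will redo Lemma~\ref{lemma: M0EFX} with the approximate bound. Fix a bundle $B_j$ with $L_j=i\neq 0$. At the moment agent $i$ modifies $B_j$, it is her favorite bundle, and the modification adds only items of zero cost to her, so $c_i(B_j)$ is unchanged. Afterwards, bundle costs under $c_i$ only increase. Hence $B_j$ remains her minimum-cost bundle at the end of Phase~2. This makes it EF, and therefore EFX-feasible, for $i$.

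For the approximate MMS bound, I argue that there are $n$ bundles and, before $i$'s turn, at most $n-1$ modifications have occurred, so some bundle is still an untouched LPT bundle $B^+_k$. Then
\[
c_i(B_j)\le c_i(B^+_k)\le \tfrac43\MMS_i .
\]
For Lemma~\ref{lemma: M+EFX}, a bundle with $L_j=0$ equals its LPT bundle, which is EFX-feasible to everyone at the start of Phase~2. Since other bundles only grow, the minimum in the EFX-feasibility condition can only increase, so EFX-feasibility is preserved. Its cost is unchanged, so it stays $\frac43$-MMS-feasible.

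Phase~3 then assigns each modified bundle to its last modifier. The map $j\mapsto L_j$ is injective because each agent modifies exactly one bundle. The unmodified bundles go to the remaining agents, of whom there are exactly as many as unmodified bundles, so the resulting allocation is EFX and $\frac43$-MMS.

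For the running time, LPT costs $O(m\log m + nm)$. Phase~2 is $n$ rounds, each taking $O(n+m)$ time to evaluate $c_i$ on all bundles and update $R$. Phase~3 is $O(n)$. The total is polynomial.

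The only real obstacle is the MMS bound in the modified-bundle case. The cost $c_i(B_j)$ can include items from $M^0$ that were added by earlier agents and are costly to $i$. The key observation that avoids this problem is that $i$'s favorite bundle is bounded by an unmodified LPT bundle, and such a bundle is guaranteed to exist by the counting argument above.
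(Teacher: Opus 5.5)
Your proposal is correct and follows essentially the same route as the paper. The LPT partition lemma replaces Lemma~\ref{lemma: Phase1M+EFX}, and the arguments of Lemmas~\ref{lemma: M0EFX} and~\ref{lemma: M+EFX} carry over with $\MMS_i$ replaced by $\frac43\MMS_i$, including the step that some bundle is still unmodified before agent $i$'s turn; the same polynomial running-time count then finishes the proof, and you simply write out the details the paper compresses into ``similar arguments.''
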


Motivated by the incompatibility between EFX and PO, we also study how to maintain fairness guarantees while ensuring PO. 
For general instances, Mahara~\cite{journals/corr/EF1andPO} established the existence of allocations that satisfy both EF1 and PO. 
We show that EF1 and PO allocations can be achieved while simultaneously preserving the MMS guarantee for restricted instances.
\begin{theorem} \label{theorem: EF1andPO}
    For restricted instances of indivisible chores, there exists an algorithm that computes allocations that satisfy both \emph{EF1}, \emph{MMS} and \emph{PO} to all agents.
\end{theorem}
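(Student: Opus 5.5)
The plan is to reuse Phase~1 and then give up EFX in Phases~2--3 in favour of a PO-forcing rule for the items of $M^0$. Concretely, I would take the partition $\bB^+$ of $M^+$ returned by Phase~1 of Algorithm~\ref{alg:restricted}. I would assign its bundles to the agents arbitrarily, say $X_i \gets B_i^+$. Then I would add each item $e \in M^0$ to the bundle of an arbitrary agent $i$ with $c_i(e)=0$; such an agent exists by the definition of $M^0$.

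The first step is efficiency. Because each item of $M^0$ goes to an agent who values it at $0$, we have $c_i(X_i)=c(X_i\cap M^+)$ for every $i$, so $sc(\bX)=c(M^+)$. For any allocation $\bX'$, each item of $M^+$ contributes exactly $c(e)$ to $sc(\bX')$ and every other item contributes a non-negative amount, so $sc(\bX')\ge c(M^+)$. Hence $\bX$ minimizes social cost. Now suppose some $\bX'$ Pareto dominated $\bX$. Summing the inequalities $c_i(X_i')\le c_i(X_i)$, one of them strict, would give $sc(\bX')<sc(\bX)$, a contradiction. So $\bX$ is PO.

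The second step is MMS. By Lemma~\ref{lemma: Phase1M+EFX}, for every agent $i$,
\[
c_i(X_i)=c(B_i^+)\le \text{MMS}_i(M^+)\le \text{MMS}_i(M),
\]
where the last inequality uses $M^+\subseteq M$.

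The third step is EF1, which is the only point needing care, since adding items changes bundles. Fix agents $i$ and $k$.
\begin{itemize}
\item If $X_i\cap M^+=\emptyset$, then $c_i(X_i)=0$. EF1 then holds trivially: either $X_i=\emptyset$, or removing any item leaves cost $0\le c_i(X_k)$.
\item Otherwise, EFX-feasibility of $B_i^+$ in $\bB^+$ (Lemma~\ref{lemma: Phase1M+EFX}) gives an item $e\in B_i^+\subseteq X_i$ with
\[
c_i(X_i-e)=c(B_i^+-e)\le c(B_k^+)=c_i(B_k^+)\le c_i(X_k).
\]
The first equality holds because the added $M^0$ items cost $i$ nothing. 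The last inequality holds because $X_k\supseteq B_k^+$ and costs are additive and non-negative.
\end{itemize}
So the allocation is EF1, MMS and PO.

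I expect no real obstacle here. The key observation is that PO is equivalent to sending every $M^0$ item to a zero-cost agent, and this is exactly the requirement the EFX algorithm had to violate. EF1, unlike EFX and unlike the EF requirement of Observation~\ref{observation:EFX-becomes-EF}, survives this, because the single removable item can always be taken from $M^+$.
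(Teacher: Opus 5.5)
Your proof is correct, and its architecture matches the paper's Algorithm~\ref{alg: EF1}. Both build a fair partition of $M^+$ under the common cost $c$, assign its bundles arbitrarily, and place each item of $M^0$ with an agent who has zero cost for it (the paper uses the first such agent in its loop order). Both get PO from minimality of social cost, and EF1 survives because the removable item can be taken from $M^+$.

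The one real difference is the rebalancing step.
\begin{itemize}
\item The paper does not reuse the EFX loop of Algorithm~\ref{alg:restricted}. It runs a separate EF1 loop that moves the \emph{largest} item of the bundle maximizing $\bar{c}$ to the cheapest bundle. It then shows, via Observations~\ref{observation: nonincreasing} and~\ref{Observation: polynomialtime} and Lemma~\ref{lemma: whileloop}, that this loop stops within $nm$ rounds.
\item Your reuse of Lemma~\ref{lemma: Phase1M+EFX} is more economical for this theorem: it needs no new termination argument, and the partition of $M^+$ it produces is EFX rather than merely EF1.
\item Your explicit EF1 check, taking any $e\in B_i^+$ and using $c_i(X_i-e)=c(B_i^+-e)\le c(B_k^+)\le c_i(X_k)$, is also more careful than the paper's one-line justification.
\end{itemize}

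What the paper's dedicated loop buys is a polynomial bound on the number of rebalancing rounds, which it relies on for Corollary~\ref{corollary: EF1}. The EFX loop you reuse is only shown to terminate via the potential $\sum_j c(B_j)^2$, with no polynomial bound.
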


Note that computing an MMS partition for an agent is NP-hard. However, by using a PTAS from \cite{journals/jacm/HochbaumS87}, we can directly obtain the following corollary.
\begin{corollary} \label{corollary: EF1}
    For restricted instances of indivisible chores, there exists an algorithm that computes allocations that satisfy both \emph{EF1}, \emph{$(1-\epsilon)$-MMS}\footnote{$(1-\epsilon)$-MMS allocations guarantee that every agent $i$ receives a bundle whose cost is at most $(1-\epsilon)\cdot \mathrm{MMS}_i$, where $\epsilon>0$.} and \emph{PO} to all agents in polynomial time.
\end{corollary}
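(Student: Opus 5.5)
The plan is to reuse the allocation scheme behind Theorem~\ref{theorem: EF1andPO}. We replace its only non-polynomial step, computing an exact MMS partition of $M^+$ with respect to the common cost $c$, by the PTAS of Hochbaum and Shmoys~\cite{journals/jacm/HochbaumS87} for makespan minimization on identical machines. (I read the intended guarantee as $c_i(X_i)\le(1+\epsilon)\cdot\MMS_i$, the chores analogue.) The items of $M^+$ are the jobs and the $n$ bundles are the machines. For any fixed $\epsilon>0$, the PTAS returns in polynomial time a partition $\bB^+$ of $M^+$ with $\max_{B\in\bB^+} c(B)\le(1+\epsilon)\,\MMS_i(M^+)\le(1+\epsilon)\,\MMS_i$ for every agent $i$. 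The second inequality holds because $M^+\subseteq M$ and all agents share $c$ on $M^+$, exactly as in Lemma~\ref{lemma: Phase1M+EFX}.

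Next I would make this partition EF1-feasible with respect to $c$ without increasing its maximum bundle cost. I would run the balancing loop of Phase~1 of Algorithm~\ref{alg:restricted}: move the smallest item $e$ of a bundle $B_b$ with $c(B_b-e)>\min_{B}c(B)$ into the currently cheapest bundle. The proof of Lemma~\ref{lemma: Phase1M+EFX} already shows two things: each move keeps every bundle at cost at most the old $c(B_b)$, so the $(1+\epsilon)$-MMS bound survives, and at termination the partition is EFX-feasible, hence EF1-feasible. We then assign the bundles arbitrarily to agents. Each item of $M^0$ goes to some agent who has zero cost for it; this is linear time.

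The guarantees then follow directly. For \emph{PO}, every allocation has social cost at least $c(M^+)$, since items of $M^+$ cost $c(e)$ to everyone. Our allocation attains exactly this value, because $M^0$-items are placed at zero cost. Any Pareto improvement would strictly lower the social cost below this minimum, which is impossible. For \emph{EF1}, fix agents $i$ and $j$. Then $c_i(X_i)=c(X_i\cap M^+)$, while $c_i(X_j)\ge c(X_j\cap M^+)$ because zero-cost items only help. So the EF1 condition for the partition of $M^+$ with respect to $c$ carries over. For \emph{approximate MMS}, agent $i$ pays only for $X_i\cap M^+$, whose cost is at most $(1+\epsilon)\MMS_i$.

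The main obstacle is making the balancing step run in polynomial time, since the potential-function argument of Lemma~\ref{lemma: Phase1M+EFX} only yields finiteness. First I would try to bound the number of moves directly. Once an item enters the minimum bundle, that bundle stays within $c(e)$ of the minimum, so each item should move $O(n)$ times in total, giving $O(nm)$ moves; this needs verifying. If that fails, I would exploit the structure of the Hochbaum--Shmoys PTAS: it places the rounded large items by dynamic programming and then adds the small items greedily to the least-loaded machine. I would reorder the greedy phase to insert small items in decreasing order, as in Algorithm~\ref{alg:load-balancing}, which keeps the small-item phase EFX-feasible by the argument of the previous lemma. The remaining case of large items could then be handled separately with the bounded-move balancing loop, since there are only $O(1/\epsilon)$ of them per bundle.
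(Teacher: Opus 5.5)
Your overall structure matches the paper's: replace the exact MMS partition of $M^+$ by the Hochbaum--Shmoys PTAS, balance without raising $\max_B c(B)$, and give each $M^0$-item to a zero-cost agent. Your PO, EF1 and $(1+\epsilon)$-MMS arguments are correct, and reading the guarantee as $(1+\epsilon)\cdot\MMS_i$ is the right interpretation.

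The gap is the running time. That is the entire content of this corollary beyond Theorem~\ref{theorem: EF1andPO}, and you leave it open. You chose the EFX balancing loop of Algorithm~\ref{alg:restricted}, for which the paper proves only finiteness, and your heuristic that each item moves $O(n)$ times is unsupported. The natural monotonicity argument actually fails for that rule. Start from bundles $\{1,1,10\}$ and $\{0.1,10.5\}$. The first donates its item $1$ (since $\hat{c}=11>10.6$), after which the recipient has $\hat{c}=11.5$, so $\max_B\hat{c}(B)$ has increased. In the next round the former donor $\{1,10\}$ is the cheapest bundle and receives $0.1$. Your fallback through the internals of the PTAS does not help either. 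It still needs the same balancing loop on the large items, and having only $O(1/\epsilon)$ large items per bundle does not bound the number of moves.

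The missing idea is that you only need EF1, so you should balance with the EF1 rule of Algorithm~\ref{alg: EF1}. Let $\bar{c}(B)=c(B)-\max_{e\in B}c(e)$. While $\max_B\bar{c}(B)>\min_B c(B)$, move the \emph{largest} item $e$ of a bundle $B_b$ maximizing $\bar{c}$ to a cheapest bundle $B_l$.
\begin{itemize}
\item This never raises the maximum cost, since $c(B_l+e)<\bar{c}(B_b)+c(e)=c(B_b)$, and it stops at an EF1-feasible partition.
\item The donor satisfies $\bar{c}(B_b-e)\le c(B_b-e)=\bar{c}(B_b)$, and the recipient satisfies $\bar{c}(B_l+e)\le c(B_l)<\bar{c}(B_b)$. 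Hence $\max_B\bar{c}(B)$ never increases.
\item Right after donating, a bundle's cost equals the $\max_B\bar{c}(B)$ of that round. A recipient must be strictly cheaper than the current $\max_B\bar{c}(B)$. So a bundle that has donated is never a recipient again.
\item Such a bundle only loses items afterwards, so it donates at most $m$ times, giving at most $nm$ rounds.
\end{itemize}
This is exactly Observations~\ref{observation: nonincreasing} and~\ref{Observation: polynomialtime} and Lemma~\ref{lemma: whileloop}. With this substitution the rest of your proof goes through unchanged.
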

Due to the limited space, we defer the proof of Theorem~\ref{theorem: EF1andPO} and Corollary~\ref{corollary: EF1} in Appendix~\ref{section: proofEF1}.

\section{Efficiency Guarantee}
\label{section: Effiency}

In this section, we investigate the efficiency guarantee of Algorithm~\ref{alg:restricted}. 
Since EFX and PO allocations may not exist for restricted instances, we evaluate the social cost of the allocation returned by our algorithm.
Note that the optimal social cost can be achieved by allocating each item to the agent who incurs the least cost for it, for a restricted instance $\mathcal{I}$, the optimal social cost (denoted by OPT($\mathcal{I}$)) is equal to $c(M^+)$.

In the following, we show that our algorithm achieves a $2$-approximation for the optimal social cost. Furthermore, we show that this approximation factor is optimal among algorithms guaranteeing EFX\footnote{The analysis of the efficiency guarantee for Algorithm~\ref{alg:restricted} and Algorithm~\ref{alg:load-balancing} is similar, and both achieve $2$ approximation ratio for the optimal social cost. Therefore, we focus on analyzing the efficiency of Algorithm~\ref{alg:restricted} in the remainder of this section.}.

\begin{theorem} \label{theorem: effiency}
Algorithm~\ref{alg:restricted} achieves a $2$-approximation of the optimal social cost for any restricted instance. Furthermore, the approximation ratio is optimal.
\end{theorem}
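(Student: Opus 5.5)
The plan is to compare the social cost of the returned allocation $\bX$ with $\mathrm{OPT}(\mathcal{I}) = c(M^+)$. Every item of $M^+$ costs $c(e)$ to whoever receives it, and every item of $M^0$ costs $0$ or $c(e)$ to its receiver. So the only excess over $c(M^+)$ comes from $M^0$-items placed in a bundle whose final owner has positive cost for them. I will split the bundles of the final partition $\bB$ by how many times they were modified in Phase 2, and charge the excess of "bad" bundles to bundles that were never modified.

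\emph{Classifying bundles.} Let $U=\{j: L_j=0\}$ be the bundles never modified; each is given to some agent at cost exactly $c(B_j^+)$. Let $S$ be the bundles modified exactly once, by agent $i=L_j$. Then $B_j = B_j^+\cup S_i$ with $S_i\subseteq M_i^0$, so its owner pays $c(B_j^+)$. Let $D$ be the bundles modified at least twice. Phase 2 makes exactly $n$ modifications in total and there are $n$ bundles. Hence
\begin{equation*}
n \;\ge\; |S| + 2|D| \quad\text{and}\quad n = |U|+|S|+|D|, \qquad\text{so}\qquad |U| \ge |D|.
\end{equation*}
Fix an injection $\phi: D\to U$.

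\emph{Bounding the bad bundles.} Take $B_j\in D$ with $L_j=i$. After agent $i$'s final modification nothing is added to $B_j$. At the moment of that modification $B_j$ was a $c_i$-minimal bundle, and adding $S_i\subseteq M_i^0$ leaves $c_i(B_j)$ unchanged. At that same moment $B_{\phi(j)}$ was still equal to $B_{\phi(j)}^+$, because bundles in $U$ are never touched, and $c_i(B^+_{\phi(j)})=c(B^+_{\phi(j)})$ since it contains only $M^+$-items. Therefore
\begin{equation*}
c_i(X_i)=c_i(B_j)\le c(B^+_{\phi(j)}).
\end{equation*}
Summing over all agents,
\begin{equation*}
sc(\bX)\le \sum_{j\in U\cup S} c(B_j^+) + \sum_{j\in D} c(B^+_{\phi(j)}) \le c(M^+) + c(M^+) = 2\,\mathrm{OPT}(\mathcal{I}),
\end{equation*}
using that the $B_j^+$ are disjoint and that $\phi$ is injective. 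The same argument applies verbatim to Algorithm~\ref{alg:load-balancing}, since it only uses the structure of Phases 2 and 3.

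\emph{Optimality of the ratio.} For the instance of Table~\ref{Table: NonEFXandPO}, $\mathrm{OPT}=1+\epsilon$. As discussed after that table, every EFX allocation must give $e_2$ to agent~$1$ or $e_3$ to agent~$2$. One checks that every EFX allocation then has social cost at least $2+\epsilon$, for instance agent $1$ taking $e_1$ and agent $2$ taking $\{e_2,e_3\}$. The ratio $(2+\epsilon)/(1+\epsilon)\to 2$ as $\epsilon\to 0$, so no EFX algorithm does better than $2$.

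The step I expect to need the most care is the counting $|U|\ge|D|$ together with the claim that $B_j$'s cost to its last modifier is frozen at her modification time. The first hinges on Phase 2 performing exactly one modification per agent, including "empty" modifications. The second hinges on $S_i$ consisting solely of zero-cost items for $i$ and on no later additions to $B_j$.
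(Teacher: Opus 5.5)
Your proof is correct and follows essentially the same route as the paper. You split bundles by the number of Phase~2 modifications (your $U,S,D$ are the paper's $N_0,N_1,N_2$), use the same counting argument to get $|U|\ge|D|$, charge the multiply-modified bundles to untouched $M^+$-only bundles, and take the lower bound from the instance in Table~\ref{Table: NonEFXandPO}. The only difference is cosmetic: you charge through an explicit injection $\phi$ and the minimality at the last modification, whereas the paper uses final envy-freeness of agents in $N_2$ and bounds their total by $|N_0|$ times the largest $c_k(X_k)$.
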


We first recall the instance shown in~Table~\ref{Table: NonEFXandPO}.
It is easy to see that the optimal social cost is $1+\epsilon$, while any EFX allocation has a social cost of $2+\epsilon$.
Hence, no EFX allocation can achieve an approximation ratio of optimal social cost better than $2$.

Next, we show that the allocation returned by Algorithm~\ref{alg:restricted} achieves an approximation ratio of $2$ to the optimal social cost, and thus is optimal. 
Let the final allocation be $\textbf{X} = (X_1, \dots, X_n)$, where $X_i$ is assigned to agent $i$.
Recall that during Phase $2$, there could be multiple agents who have modified bundle $X_i$.
Therefore, it is possible that $c_i(X_i)$ is strictly larger than $c_i(X_i \cap M^+)$: this happens when some other agent $j$ modified $X_i$ before agent $i$, during which some items from $M_j^0 \setminus M_i^0$ are added to $X_i$.
Indeed, this is why we can only guarantee an approximation of the optimal social cost, rather than achieving $sc(\bX) = c(M^+)$.

For any bundle $X \in \bX$, we define $d(X)$ as the number of agents who modify $X$ during Phase $2$. 
Based on the values of $d(X_i)$, we partition the agents into three groups:
\begin{itemize}
    \item $N_0 = \{i \in N : d(X_i) = 0\}$;
    \item $N_1 = \{i \in N : d(X_i) = 1\}$;
    \item $N_2 = \{i \in N : d(X_i) \geq 2\}$.
\end{itemize}

Consider the instance illustrated in Figure~\ref{Fig: Algorithm}.
By the design of the algorithm, bundles $B_2$, $B_4$, and $B_6$ are assigned to agents $3$, $4$, and $6$, respectively.
Suppose the remaining bundles $B_1$, $B_3$, and $B_5$ are assigned to agents $1$, $2$, and $5$, respectively.
Then we have $N_0 = \{1, 2, 5\}$, $N_1 = \{4\}$, and $N_2 = \{3, 6\}$.

Note that for all $i\in N_0$, we have $X_i \subseteq M^+$ since it is not modified during Phase $2$; for all $i\in N_1$, we have $X_i \subseteq M^+\cup M^0_i$ since agent $i$ is the only agent who has modified $X_i$.
For these agents we have $c_i(X_i) = c_i(X_i\cap M^+)$.
Therefore, the increase in social cost is due to agents in $N_2$.

Next, we introduce a lemma that upper bounds the number of agents in $N_2$, which will be useful in bounding the total cost.

\begin{lemma} \label{lemma: agentpartition}
    We have $|N_0| \geq |N_2|$.
\end{lemma}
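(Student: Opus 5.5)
Every agent performs exactly one modification in Phase 2, so we count modifications per bundle. The proof is a counting argument in three steps.

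\emph{Step 1: total modifications.} The total number of modifications equals $n$, so $\sum_{j\in[n]} d(B_j) = n$. Since each bundle in $\bB$ is assigned to exactly one agent in $\bX$, this also gives $\sum_{i\in N} d(X_i) = n$.

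\emph{Step 2: rewrite by groups.} Splitting the sum according to $N_0, N_1, N_2$ gives
\begin{equation*}
    n = \sum_{i \in N_1} d(X_i) + \sum_{i \in N_2} d(X_i) \geq |N_1| + 2|N_2| .
\end{equation*}
We also have $n = |N_0| + |N_1| + |N_2|$, because every agent lies in exactly one group.

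\emph{Step 3: compare.} Subtracting the two relations yields $|N_0| + |N_1| + |N_2| \geq |N_1| + 2|N_2|$, that is, $|N_0| \geq |N_2|$.

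The one point needing care is that the groups are defined via $d(X_i)$, the number of agents modifying the bundle that agent $i$ finally holds, rather than via agent $i$'s own modification. So I must confirm that $\bX$ is a bijection between agents and bundles of $\bB$. This holds by Phase 3: it assigns each bundle with $L_j \neq 0$ to the distinct agent $L_j$, and the remaining bundles one-to-one to the remaining agents. These two assignments never conflict, since each agent $i$ appears as $L_j$ for at most one $j$ (she modifies exactly one bundle). With that checked, the argument goes through; there is no real obstacle.
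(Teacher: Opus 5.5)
Your proof is correct and follows the paper's argument exactly: count the $n$ modifications over the final bundles, use $d(X_i)\ge 2$ on $N_2$ and $d(X_i)=0$ on $N_0$, and compare with $n=|N_0|+|N_1|+|N_2|$. Your added check that Phase 3 assigns bundles to agents bijectively (each agent is $L_j$ for at most one $j$) is something the paper leaves implicit.
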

\begin{proof}
Since each of the $n$ agents performs exactly one modification in Phase $2$, the total number of modifications across all bundles in Phase 2 is
\begin{equation*}
    \sum\limits_{i \in N_1} d(X_i) + \sum\limits_{i \in N_2} d(X_i) = n.
\end{equation*}

Since $d(X_i) = 1$ for any agent $i \in N_1$ and $d(X_i) \geq 2$ for any agent $i \in N_2$, we have:
\begin{equation*}
    n = \sum_{i \in N_1} d(X_i) + \sum_{i \in N_2} d(X_i) \geq |N_1| + 2|N_2|,
\end{equation*}
which implies $|N_0| \geq |N_2|$ since $|N_0| + |N_1| + |N_2| = n$.
\end{proof}



Next, we provide upper bounds for the total costs incurred by agents from each group.

\begin{lemma} \label{lemma: BoundofGroupc}
    We have the following properties:
    \begin{itemize}
        \item $\sum\limits_{i \in N_0} c_i(X_i) + \sum\limits_{i \in N_1} c_i(X_i) \leq c(M^+)$;
        \item $\sum\limits_{i \in N_2} c_i(X_i) \leq c(M^+)$.
    \end{itemize}
\end{lemma}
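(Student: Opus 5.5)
For the first bullet, I will use the structural facts already noted before Lemma~\ref{lemma: agentpartition}. For $i\in N_0$ we have $X_i\subseteq M^+$. For $i\in N_1$ we have $X_i\subseteq M^+\cup M_i^0$. Hence in both cases $c_i(X_i)=c_i(X_i\cap M^+)=c(X_i\cap M^+)$, since $c_i$ agrees with $c$ on $M^+$ and vanishes on $M_i^0$. Summing over $N_0\cup N_1$ and using that the bundles $X_i$ are disjoint subsets of $M$ gives
\[
\sum_{i\in N_0\cup N_1} c_i(X_i)=\sum_{i\in N_0\cup N_1} c(X_i\cap M^+)\le c(M^+).
\]

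For the second bullet, the key is the envy-freeness of agents who performed the last modification. Every $i\in N_2$ has $d(X_i)\ge 2$, so $X_i$ was modified and, by Phase~3, $L_j=i$ for the bundle $X_i=B_j$. By the argument of Lemma~\ref{lemma: M0EFX}, $X_i$ is agent $i$'s favorite bundle at the end of Phase~2: $c_i(X_i)\le c_i(B_k)$ for every $k$. I will compare $X_i$ with the bundles held by $N_0$. For $k\in N_0$ the bundle $X_k$ lies in $M^+$, so $c_i(X_k)=c(X_k)$. Thus each agent in $N_2$ has cost at most $c(X_k)$ for every $k\in N_0$, i.e., at most $\min_{k\in N_0} c(X_k)$.

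Next I invoke Lemma~\ref{lemma: agentpartition}, $|N_0|\ge |N_2|$, and fix an injection $\sigma:N_2\to N_0$. (If $N_2=\emptyset$ the bound is trivial; otherwise $N_0\neq\emptyset$, so the injection exists.) Then
\[
\sum_{i\in N_2}c_i(X_i)\le\sum_{i\in N_2}c(X_{\sigma(i)})\le\sum_{k\in N_0}c(X_k)\le c(M^+),
\]
where the last step again uses disjointness of the bundles and $X_k\subseteq M^+$ for $k\in N_0$.

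The step needing the most care is the envy-freeness claim for $i\in N_2$. Other agents may add items in $M^0\setminus M_i^0$ to $X_i$ before agent $i$'s turn. I must check that the chosen-favorite argument of Lemma~\ref{lemma: M0EFX} still applies. It does: $i$ picks the $c_i$-minimal bundle at her turn, her own addition costs her $0$, and afterwards only other bundles grow. So $c_i(X_i)\le c_i(B_k)$ holds for all final bundles, in particular for the untouched bundles $X_k$, $k\in N_0$.
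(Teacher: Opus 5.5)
Your proposal is correct and follows the paper's proof. The first bullet uses $c_i(X_i)=c(X_i\cap M^+)$ for $i\in N_0\cup N_1$ together with disjointness of the bundles, and the second uses envy-freeness of each $i\in N_2$ toward the untouched $M^+$-only bundles of $N_0$ plus $|N_0|\ge|N_2|$. The only difference is bookkeeping: you charge each $i\in N_2$ to a distinct $\sigma(i)\in N_0$ via an injection, while the paper bounds every term by the largest $c_k(X_k)$ with $k\in N_2$ before invoking $|N_2|\le|N_0|$, and this makes no substantive difference.
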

\begin{proof}
    As argued above, for all $i\in N_0\cup N_1$ we have $c_i(X_i) = c_i(X_i\cap M^+) = c(X_i\cap M^+)$, since $X_i$ is either not modified, or only modified by agent $i$.
    Therefore, we can bound the social cost incurred by agents in $N_0$ and $N_1$ by
    \begin{equation*}
        \sum_{i \in N_0} c_i(X_i) + \sum_{i \in N_1} c_i(X_i) 
        = \sum_{i \in N_0 \cup N_1} c(X_i \cap M^+)
        \leq c(M^+).
    \end{equation*}
    
    Next, we show that $\sum_{i \in N_2} c_i(X_i) \leq c(M^+)$. 
    Fix any agent $i \in N_2$.
    By the design of the algorithm, $X_i$ is the most preferred bundle for $i$ in the final allocation, which implies that its cost is no greater than that of any bundle in $\bX$. 
    In particular, we have $c_i(X_i) \leq c_i(X_j)$ for all $j\in N_0$.
    Let $k \in N_2$ be the agent with the maximum cost $c_k(X_k)$. 
    Combined with $|N_0| \geq |N_2|$ by Lemma~\ref{lemma: agentpartition}, we have
    \begin{align*}
    \sum_{i \in N_2} c_i(X_i) 
    &\leq |N_2| \cdot c_k(X_k) \leq  |N_0| \cdot c_k(X_k)\\
    &\leq \sum_{j \in N_0} c_k(X_j) = \sum_{j \in N_0} c(X_j) \leq c(M^+).
    \end{align*}

    where the equality holds since for all $j\in N_0$, $X_j$ contains only items from $M^+$.
\end{proof}

Lemma~\ref{lemma: BoundofGroupc} immediately implies Theorem~\ref{theorem: effiency} since
\begin{equation*}
    sc(\bX) = \sum_{i \in N}c_i(X_i) 
    = \sum_{i \in N_0 \cup N_1}c_i(X_i) + \sum_{i \in N_2}c_i(X_i)
    \leq 2 \cdot c(M^+).
\end{equation*}

Interestingly, our algorithm and analysis imply a tight characterization for the price of fairness (PoF) for EFX.

\begin{definition}[PoF for EFX]
The price of fairness for EFX on instance $\mathcal{I}$ is defined as
\begin{equation*}
    \text{PoF}(\mathcal{I}) = \min\limits_{\text{EFX allocation \textbf{X}}} \ \left\{\frac{sc(\bX)}{\text{OPT}(\mathcal{I})}\right\}, 
\end{equation*}
where $\text{OPT}(\mathcal{I})$ denotes the minimum social cost over all allocations, without any fairness constraint.
Moreover, the price of fairness for EFX, denoted by $PoF$, is defined as the worst-case ratio over all instances, i.e., $\text{PoF} = \sup\limits_{\mathcal{I}} \{\text{PoF}(\mathcal{I})\}$.
\end{definition}

Therefore, we can get the following corollary directly.
\begin{corollary} \label{corollary: PoF}
For restricted instances, the price of EFX is $2$.
\end{corollary}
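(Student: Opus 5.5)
The claim that the price of EFX equals $2$ splits into an upper bound $\text{PoF}\le 2$ and a matching lower bound $\text{PoF}\ge 2$. Both follow from results already established, so the argument is mostly bookkeeping.

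For the upper bound, fix an arbitrary restricted instance $\mathcal{I}$. By Theorem~\ref{theorem: EFX}, Algorithm~\ref{alg:restricted} returns an allocation $\bX$ that is EFX, so $\bX$ is a candidate in the minimization defining $\text{PoF}(\mathcal{I})$. By Theorem~\ref{theorem: effiency}, more precisely the bound $sc(\bX)\le 2\cdot c(M^+)$ derived from Lemma~\ref{lemma: BoundofGroupc}, together with $\text{OPT}(\mathcal{I}) = c(M^+)$, we obtain $\text{PoF}(\mathcal{I}) \le sc(\bX)/\text{OPT}(\mathcal{I}) \le 2$. Taking the supremum over $\mathcal{I}$ gives $\text{PoF}\le 2$.

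One degenerate case needs care: $c(M^+)=0$. Then every item has zero cost for some agent, so $\text{OPT}(\mathcal{I})=0$. The bound $sc(\bX)\le 2\cdot 0$ forces $sc(\bX)=0$, and the ratio is $0/0$. I would adopt the convention that such instances have ratio $1$, or simply exclude them since they do not affect the supremum. I would state this convention explicitly in the proof.

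For the lower bound, I use the instance of Table~\ref{Table: NonEFXandPO}, parametrized by $\epsilon>0$. Its optimum is $c(M^+)=c(e_1)=1+\epsilon$. I then check that every EFX allocation has social cost $2+\epsilon$. The instance is small enough to enumerate. If agent $2$ gets $e_2$ and agent $1$ gets $e_3$, then whoever receives $e_1$ has a zero-cost item in her bundle. Removing that zero-cost item leaves her with cost $1+\epsilon$ against an envied bundle of cost at most $1$ to her, so EFX fails. Hence some agent must pay $1$ for $e_2$ or $e_3$ on top of the $1+\epsilon$ paid for $e_1$, giving $sc\ge 2+\epsilon$. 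The allocation described in the introduction (agent $1$ gets $e_1$, agent $2$ gets $e_2,e_3$) attains $2+\epsilon$ and is EFX.

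Therefore $\text{PoF}(\mathcal{I}_\epsilon) = (2+\epsilon)/(1+\epsilon)$, which tends to $2$ as $\epsilon\to 0$, so $\text{PoF}=\sup_\epsilon \text{PoF}(\mathcal{I}_\epsilon)\ge 2$. Note that $2$ is approached as a supremum but not attained by this family, and the definition as a supremum is exactly what makes the equality $\text{PoF}=2$ valid. The only real subtlety is this point together with the degenerate $\text{OPT}=0$ case; everything else is immediate from Theorems~\ref{theorem: EFX} and~\ref{theorem: effiency}.
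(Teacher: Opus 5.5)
Your proposal is correct and follows the paper's route: the upper bound comes from the EFX allocation of Algorithm~\ref{alg:restricted} satisfying $sc(\bX)\le 2\,c(M^+)=2\,\text{OPT}(\mathcal{I})$, and the lower bound comes from the instance of Table~\ref{Table: NonEFXandPO}. You also fill in two details the paper leaves implicit: the ratio $(2+\epsilon)/(1+\epsilon)$ reaches $2$ only as a supremum, and the degenerate case $\text{OPT}(\mathcal{I})=0$ needs a convention.
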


Furthermore, we also consider the cost of fairness (CoF)~\cite{journals/corr/abs-2410-15738}, which is defined as the worst-case additive difference between the optimal social cost under fairness constraints and the optimal social cost without such constraints, assuming that $c_i(M) \leq 1$ for all $i\in N$.
Equivalently, we define it as follows.

\begin{definition}[CoF for EFX]
    The cost of fairness for EFX is defined as
    \begin{equation*}
        \sup_{\mathcal{I}} \min\limits_{\text{EFX allocation \textbf{X}}} \left\{ \frac{sc(\bX) - \text{OPT}(\mathcal{I})}{\max_{i \in N} c_i(M)} \right\}.
    \end{equation*}
\end{definition}

A hard instance, adapted from Table~\ref{Table: NonEFXandPO}, establishes a lower bound of $1/2$ for the cost of EFX (see Table~\ref{Table: CoF}).
It is easy to see that the optimal social cost is $1/2+\epsilon$, while any EFX allocation has a social cost of $1$.
Hence, the cost of EFX is at least $1 - (1/2+\epsilon) \to 1/2$ when $\epsilon \to 0$.

\begin{table}[h!] 
\centering 
\begin{tabular}{lccc} 
\toprule
\textbf{Agent} & \textbf{$e_1$} & \textbf{$e_2$} & \textbf{$e_3$} \\
\midrule
\ \ \ \ 1  & $1/2 + \epsilon$ & $1/2 - \epsilon$ & $0$ \\
\ \ \ \ 2  & $1/2 + \epsilon$ & $0$ & $1/2 - \epsilon$ \\
\bottomrule
\end{tabular}
\caption{Hard instance for the cost of EFX\\($\epsilon$ is a sufficiently small number)} \label{Table: CoF}
\end{table}

It suffices to show that the upper bound is $1/2$. 
We remark that the properties in Lemma~\ref{lemma: BoundofGroupc} still hold.
To upper bound the cost of EFX, we provide an alternative upper bound for the total cost of agents in $N_2$.

\begin{lemma}
    $\sum\limits_{i \in N_2} c_i(X_i) \leq \frac{1}{2} \cdot \max_{i\in N} c_i(M)$.
\end{lemma}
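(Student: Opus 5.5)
We need $\sum_{i\in N_2} c_i(X_i) \le \frac12 \max_i c_i(M)$. Throughout, fix $k\in N_2$ to be the agent with the largest cost $c_k(X_k)$ among $N_2$, and write $C=\max_{i\in N} c_i(M)$.

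The plan is to pair each agent in $N_2$ with a bundle in $N_0$, so that we get $2\sum_{i\in N_2} c_i(X_i)$ bounded by a quantity of the form $c_k(\cdot)$ evaluated on disjoint bundles. By Lemma~\ref{lemma: agentpartition}, $|N_0|\ge |N_2|$. In the proof of Lemma~\ref{lemma: BoundofGroupc}, each agent in $N_2$ is envy-free towards all bundles, so $c_i(X_i)\le c_i(X_j)=c(X_j)$ for every $j\in N_0$. This gives
\[
\sum_{i\in N_2} c_i(X_i)\le |N_2|\, c_k(X_k)\le \sum_{j\in N_0} c_k(X_j).
\]

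For the complementary bound, we use $k$'s own bundle together with the bundles of the other agents in $N_2$. Since $k$ is envy-free towards every bundle, $c_k(X_k)\le c_k(X_{i'})$ for each $i'\in N_2$. Hence
\[
\sum_{i\in N_2} c_i(X_i)\le |N_2|\, c_k(X_k)\le \sum_{i'\in N_2} c_k(X_{i'}).
\]
Adding the two displays, and noting that the bundles $X_j$ ($j\in N_0$) and $X_{i'}$ ($i'\in N_2$) are pairwise disjoint because $N_0\cap N_2=\emptyset$, we obtain
\[
2\sum_{i\in N_2} c_i(X_i)\le c_k\Big(\bigcup_{j\in N_0\cup N_2} X_j\Big)\le c_k(M)\le C,
\]
which is exactly the claim.

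The only delicate point is that the second display needs envy-freeness of $k$ towards bundles that are themselves modified, not only towards the $N_0$ bundles. This holds by the argument of Lemma~\ref{lemma: M0EFX}: $X_k$ is $k$'s favourite bundle in the final partition, so $c_k(X_k)\le c_k(X_{i'})$ for every bundle, including those held by $N_2$. The bound $c_k(X_{i'})\ge c_k(X_k)$ is all we need, and it does not require $c_k(X_{i'})$ to be small. Finally, we must check that the two sums do not double count $X_k$ in a way that breaks the count: the second sum includes $X_k$ itself (the term $i'=k$), but it is used only once and is disjoint from all $N_0$ bundles, so the union bound by $c_k(M)$ remains valid.
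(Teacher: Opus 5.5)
Your proof is correct, but the counting is organized differently from the paper's. The paper argues agent by agent. Each $i\in N_2$ holds her favourite bundle among the $n$ bundles of $\bX$, and these bundles partition $M$, so averaging gives $c_i(X_i)\le \frac{1}{n}c_i(M)\le\frac{1}{n}\max_{j\in N}c_j(M)$. Then $|N_2|\le n/2$ (from Lemma~\ref{lemma: agentpartition}) finishes the proof in one line.

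You instead fix the worst agent $k$ and extend the pairing from Lemma~\ref{lemma: BoundofGroupc}: you compare $X_k$ under $c_k$ once with the $N_0$ bundles and once with the $N_2$ bundles. Adding the two displays amounts to averaging $c_k$ over the $|N_0|+|N_2|\ge 2|N_2|$ disjoint bundles held by $N_0\cup N_2$. Both arguments rest on the same facts: the favourite-bundle property from Lemma~\ref{lemma: M0EFX} and $|N_2|\le|N_0|$. So your ``delicate point'' asks for nothing beyond what the paper already uses.

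The paper's route is shorter and needs no distinguished agent. Yours shows in passing that the $N_1$ bundles are irrelevant, i.e.\ $\sum_{i\in N_2}c_i(X_i)\le\frac12\,c_k\bigl(M\setminus\bigcup_{i\in N_1}X_i\bigr)$, though this refinement is not needed here. One small omission: $k$ is only defined when $N_2\neq\emptyset$, so you should note that the empty case is trivial.
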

\begin{proof}
    Note that each agent in $N_2$ receives her favorite bundle.
    For any agent $i\in N_2$, we have 
    \begin{equation*}
        c_i(X_i) \leq \frac{1}{n} \cdot c_i(M) \leq \frac{1}{n} \cdot \max_{j\in N} c_j(M).
    \end{equation*}
    Recall that following Lemma~\ref{lemma: agentpartition} we have $|N_0| \geq |N_2|$, which implies that $|N_2| \leq n/2$.
    Hence we have 
    \begin{equation*}
        \sum\limits_{i \in N_2} c_i(X_i) \leq \frac{1}{2} \cdot \max_{i\in N} c_i(M).  \qedhere
    \end{equation*}
\end{proof}

Consequently, the cost of EFX can be upper bounded by
\begin{align*}
    sc(\bX) - OPT(\mathcal{I}) &= sc(\bX) - c(M^+) \\
    &= \sum_{i \in N_0 \cup N_1}c_i(X_i) + \sum_{i \in N_2}c_i(X_i) - c(M^+) \\
    &\leq \frac{1}{2} \cdot \max_{i\in N} c_i(M).
\end{align*}
This implies that 
$$
\frac{sc(\mathbf{X}) - OPT(\mathcal{I})}{\max_{i \in N} c_i(M)} \leq \frac{1}{2}.
$$


Thus, we obtain the following corollary.
\begin{corollary}\label{corollary:CoF}
For restricted instances, the cost of EFX is ${1}/{2}$.
\end{corollary}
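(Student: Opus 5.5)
The corollary asserts two bounds. The lower bound $1/2$ comes from the instance in Table~\ref{Table: CoF}, and the upper bound $1/2$ comes from the allocation returned by Algorithm~\ref{alg:restricted}.

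\emph{Lower bound.} In Table~\ref{Table: CoF} we have $\max_i c_i(M) = 1$ and $\mathrm{OPT}(\mathcal{I}) = c(M^+) = 1/2+\epsilon$, attained by giving $e_2$ to agent $2$ and $e_3$ to agent $1$. I would check that every EFX allocation has social cost exactly $1$.
\begin{itemize}
\item Suppose agent $1$ holds $\{e_1, e_3\}$. She incurs $1/2+\epsilon$ against a bundle she values at $1/2-\epsilon$, and removing $e_3$ does not remove the envy, so this is not EFX. Symmetrically, agent $2$ cannot hold $\{e_1, e_2\}$.
\item Whoever holds $e_1$ can therefore receive neither of their zero-cost items. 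The other agent receives both $e_2$ and $e_3$, paying $1/2-\epsilon$.
\end{itemize}
The total is then $1$, so the ratio $\bigl(1-(1/2+\epsilon)\bigr)/1 \to 1/2$ as $\epsilon \to 0$. Taking the supremum over instances gives cost of EFX at least $1/2$.

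\emph{Upper bound.} Fix any restricted instance and let $\bX$ be the output of Algorithm~\ref{alg:restricted}, which is EFX by Theorem~\ref{theorem: EFX}.
\begin{itemize}
\item Split $sc(\bX)$ over $N_0 \cup N_1$ and $N_2$. The first part of Lemma~\ref{lemma: BoundofGroupc} gives $\sum_{i\in N_0\cup N_1} c_i(X_i) \le c(M^+) = \mathrm{OPT}(\mathcal{I})$.
\item For $N_2$, use the preceding lemma. Each $i \in N_2$ receives her favorite bundle, so $c_i(X_i) \le c_i(M)/n$. Lemma~\ref{lemma: agentpartition} gives $|N_2| \le n/2$. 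Together these yield $\sum_{i\in N_2} c_i(X_i) \le \tfrac12 \max_i c_i(M)$.
\end{itemize}
Adding the two bounds and subtracting $\mathrm{OPT}$ gives $sc(\bX) - \mathrm{OPT}(\mathcal{I}) \le \tfrac12 \max_i c_i(M)$. The minimum over EFX allocations is at most this value for every $\mathcal{I}$, so the supremum is at most $1/2$.

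\emph{Main obstacle.} The only delicate point is the favorite-bundle bound $c_i(X_i) \le c_i(M)/n$ for $i \in N_2$. It holds because each such $i$ is the last agent to modify $X_i$, so by the argument of Lemma~\ref{lemma: M0EFX} we get $c_i(X_i) \le c_i(X_j)$ for all $j$. Averaging over the $n$ bundles, which partition $M$, gives the bound. Both directions then meet at $1/2$, which proves the corollary.
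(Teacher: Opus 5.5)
Your proposal is correct and follows the paper's argument essentially step for step. The lower bound uses the instance of Table~\ref{Table: CoF}, and the upper bound combines the first part of Lemma~\ref{lemma: BoundofGroupc} with the favorite-bundle bound $c_i(X_i)\le c_i(M)/n$ for $i\in N_2$ and $|N_2|\le n/2$ from Lemma~\ref{lemma: agentpartition}.

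There is one minor slip in the lower bound: ``the other agent receives both $e_2$ and $e_3$'' does not follow from your first bullet, because the holder of $e_1$ could also take the other agent's zero-cost item. That case also violates EFX, and in any event your first bullet alone suffices: any allocation with social cost below $1$ must give $e_2$ to agent $2$ and $e_3$ to agent $1$.
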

\section{Conclusion and Open Problems}
In this paper, we study the computation of fair and efficient allocations for indivisible chores in the restricted setting. 
We propose an algorithm for computing allocations that are EFX, MMS, and achieve an optimal $2$-approximation of the optimal social cost. 
An important open problem is whether EFX allocations (even without any efficiency guarantee) exist for other classes of instances, e.g., bi-valued setting.
For the allocation of goods, the existence of EFX allocations for restricted instances also remains open.


\newpage
\bibliographystyle{abbrv}
\bibliography{efx}

\newpage
\appendix
\section{Non-existence of EFX allocations for $\infty$-restricted instance} 
In this section, we consider an alternative and natural extension of the restricted setting for chores, which we refer to as the \emph{$\infty$-restricted} instance.
\label{section: infty}
\begin{definition}
    A set $\{c_1, c_2, \dots, c_n\}$ of cost functions is $\infty$-restricted, if for every $1 \leq i \leq n$, for every item $e \in M$, we have $c_i(e) \in \{\infty, c(e)\}$, where $c(e) \geq 0$. Specifically, if $c_i(e) = c(e)$, we say that item $e$ is \emph{relevant} to agent $i$.
\end{definition}
In this model, items cannot be allocated to agents who view them as having infinite cost.
Moreover, we assume that for every item $e \in M$, there exists at least one agent $i$ such that $c_i(e) = c(e)$, ensuring that all items are relevant to at least one agent.

Under these assumptions, we study the existence of EFX allocations subject to the restriction that items with infinite cost cannot be assigned to the corresponding agents.

We consider a special case in this setting, where each item $e$ is considered relevant (with cost $c(e)$) by exactly two agents. Such an instance can be represented as an undirected graph, where each vertex corresponds to an agent and each edge corresponds to an item.
To compute an EFX allocation, we need to find an EFX orientation on the graph, i.e., for each edge, we determine which of its two incident agents receives the corresponding item, such that the resulting allocation is EFX.

We provide an example (see Example~\ref{Example: infty} and Figure~\ref{Figure: infty} for details) that illustrates the nonexistence of EFX allocations in this setting.
\begin{example} \label{Example: infty}
Consider an $\infty$-restricted instance with eight agents and ten items; the structure of agent-item relevance is illustrated in Figure~\ref{Figure: infty}. Blue thick edges represent items with a large cost, e.g., $c(e) = 1000$, while black thin edges represent items with a small cost, e.g., $c(e) = 1$. 
Focusing on agents $1$ through $4$, note that there are six edges between them. By the pigeonhole principle, at least one of these agents must receive more than one item.
Without loss of generality, suppose agent $1$ receives more than one item. Then, regardless of how we allocate the item shared between agents $1$ and $5$, agent $1$ will not be EFX towards agent $5$, due to receiving multiple large-cost items.
\end{example}

\begin{figure}[h]
\centering
\begin{tikzpicture}[scale=1.2]

\tikzstyle{mynode} = [circle, draw, minimum size=14pt, inner sep=1pt, font=\small]

\node[mynode] (v1) at (-0.7, 0.8) {1};
\node[mynode] (v2) at (0.7, 0.8) {2};
\node[mynode] (v3) at (0.7, -0.8) {3};
\node[mynode] (v4) at (-0.7, -0.8) {4};

\foreach \i/\j in {1/2, 1/3, 1/4, 2/3, 2/4, 3/4} {
    \draw[blue, thick] (v\i) -- (v\j);
}

\node[mynode] (u1) at (-1.5, 0.8) {5};
\node[mynode] (u2) at (1.5, 0.8) {6};
\node[mynode] (u3) at (1.5, -0.8) {7};
\node[mynode] (u4) at (-1.5, -0.8) {8};

\foreach \i in {1,...,4} {
    \draw (v\i) -- (u\i);
}

\end{tikzpicture}
\caption{Illustration of Example~\ref{Example: infty}. }\label{Figure: infty} 
\end{figure}
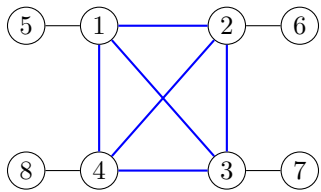

\section{EF1 + MMS + PO Allocations} 
\label{section: proofEF1}

\begin{algorithm}[!htb]
\caption{Computation of EF1, MMS, and PO Allocations for Restricted Instances}
\label{alg: EF1}
\KwIn{Restricted instance $I = (N, M, c, \{c_i\}_{i \in N}$)}
    Initialize $B_i \gets \emptyset$ for all $i \in [n]$; \\
    \tcp{Phase 1: Allocation of items in $M^+$} 
Compute an MMS partition $\textbf{B} = (B_1, \dots, B_n)$ of $M^+$ with respect to $c$; \\[4pt]

\tcp{Modify the partition to ensure EF1}
\While{$\max_{B \in \bB} \bar{c}(B) > \min_{B \in \bB} c(B)$ \label{While-loop: EF1}}{ 
    $B_b \gets \arg\max_{B \in \bB} \bar{c}(B)$; \\
    $B_l \gets \arg\min_{B \in \bB} c(B)$; \\
    $e \gets \arg\max_{e' \in B_b} c(e')$; \\
    $B_l \gets B_l + e$;\\    
    $B_b \gets B_b - e$;\\
}

\tcp{Phase 2: Allocation of items in $M^0$}
$R \gets M^0$; \\
\For{$i \in N$}{
    $S_i \gets M_i^0 \cap R$; \\
    Pick an arbitrary bundle $B \in \textbf{B}$; \\
    $B \gets B \cup S_i$; \\
    $X_i \gets B$; \\
    $R \gets R \setminus S_i$; \\
    $\textbf{B} \gets \textbf{B} \setminus \{B\}$; \\
}
\KwOut{$\bX$ that satisfies EF1, MMS, and PO simultanously}
\end{algorithm}
To prove Theorem~\ref{theorem: EF1andPO}, we present Algorithm~\ref{alg: EF1} and introduce several key observations.
For convenience, for agent $i$ and bundle $B$, we further define
\begin{equation*}
    \bar{c}_i(B) = \min_{e \in B}\{c_i(B - e)\}
\end{equation*}
as the bundle's cost after removing the item with the largest cost.

We refer to one iteration of the while-loop as a round, and index the rounds by $t = 1, 2, \dots$. 
We use $B_b^t$ and $B_l^t$ to denote the bundles that we identify at the beginning of round $t$.
We refer to $B_b^t$ and $B_l^t$ as the big bundle and the least bundle in round $t$, respectively\footnote{When selecting the big bundle, we break ties by choosing the bundle with the smallest index. Therefore, we refer to it as "the" big bundle, with a similar rule applied to "the" least bundle.}.
\begin{observation} \label{observation: nonincreasing}
Let the while-loop on Line~\ref{While-loop: EF1} run for a total of $t$ rounds. Then for all $1 \leq t' \leq t - 1$, we have $\bar{c}(B_b^{t'}) \geq \bar{c}(B_b^{t'+1})$ and $c(B_l^{t'}) \leq c(B_l^{t'+1})$.
\end{observation}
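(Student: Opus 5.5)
The plan is to compare the partition at the start of round $t'$ with the one at the start of round $t'+1$. Only the two bundles touched in round $t'$ change, so it suffices to bound their new values of $c$ and $\bar{c}$. Fix a round $t'\le t-1$ and write $B_b = B_b^{t'}$ and $B_l = B_l^{t'}$. Let $e$ be the item of largest cost in $B_b$ that the loop moves. By the choice of $e$ we have $c(B_b - e) = \bar{c}(B_b)$. Since round $t'$ is executed, the loop condition gives
\[
\bar{c}(B_b) > c(B_l).
\]
In particular $B_b \neq B_l$. After the round the new bundles are $B_b' = B_b - e$ and $B_l' = B_l + e$, and every other bundle is unchanged. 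Throughout, the costs are w.r.t. the common function $c$, since the while-loop only involves items of $M^+$.

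For the least bundle, I will show that every bundle of the new partition has cost at least $c(B_l)$. Unchanged bundles had cost at least $c(B_l)$ by the choice of $B_l$ as the minimum. For the two modified bundles,
\[
c(B_b') = \bar{c}(B_b) > c(B_l) \qquad\text{and}\qquad c(B_l') = c(B_l) + c(e) \ge c(B_l).
\]
Hence the minimum cost at round $t'+1$ is at least the minimum at round $t'$, which gives $c(B_l^{t'+1}) \ge c(B_l^{t'})$.

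For the big bundle, I will show that every bundle of the new partition has $\bar{c}$-value at most $\bar{c}(B_b)$. Unchanged bundles satisfy this by the choice of $B_b$ as the maximizer of $\bar{c}$. For $B_b'$, removing any item cannot increase the cost, so
\[
\bar{c}(B_b') \le c(B_b') = \bar{c}(B_b).
\]
If $B_b'$ is empty, I will use the convention $\bar{c}(\emptyset)=0$, which is consistent with this bound. For $B_l'$, the minimum in the definition of $\bar{c}$ is at most the value obtained by removing $e$ itself, so
\[
\bar{c}(B_l') \le c(B_l' - e) = c(B_l) < \bar{c}(B_b).
\]
Therefore the maximum of $\bar{c}$ at round $t'+1$ is at most $\bar{c}(B_b^{t'})$, which gives $\bar{c}(B_b^{t'+1}) \le \bar{c}(B_b^{t'})$.

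The tie-breaking rule in the footnote does not affect either claim, because both claims concern only the extreme values $\max \bar{c}$ and $\min c$. No step seems to be a genuine obstacle. The only delicate points are to use the fact that the moved item is the largest one, which gives $c(B_b - e) = \bar{c}(B_b)$, and to handle the possibly empty bundle $B_b'$ via the convention above.
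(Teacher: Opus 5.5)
Your proof is correct and follows the same route as the paper: only the two bundles touched in round $t'$ change, and you compare their new $c$ and $\bar{c}$ values against the old extremes $\min c$ and $\max \bar{c}$. Your bound $\bar{c}(B_l^{t'}+e) \le c(B_l^{t'}) < \bar{c}(B_b^{t'})$ is the exact comparison the $\max \bar{c}$ claim needs. The paper states this step only loosely, via $c(B_l^{t'}\cup\{e\}) < c(B_b^{t'})$.
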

\begin{proof}
    In round $t'$, we select bundles $B_b^{t'}$ and $B_l^{t'}$, and transfer the maximum-cost item $e$ (w.r.t. $c$) from $B_b^{t'}$ to $B_l^{t'}$.  
    As a result, the cost of $B_b^{t'}$ does not increase in this round.
    Moreover, since we have $c(B_l^{t'} \cup \{e\}) < c(B_b^{t'})$ by the while-loop condition, the updated cost of $B_l^{t'}$ does not exceed the previous cost of $B_b^{t'}$.
    The items of all other bundles remain unchanged in this round. 
    Therefore, the value of $\max_{i \in [n]}\{\bar{c}(B_i^{t'})\}$ does not increase in this round, which implies that $\bar{c}(B_b^{t'}) \geq \bar{c}(B_b^{t' + 1})$.
    On the other hand, the updated cost of $B_b^{t'}$ does not lower than the previous cost of $B_l^{t'}$, which implies that the minimum cost (w.r.t. $c$) will not decrease in this round, leading to the fact that $c(B_l^{t'}) \leq c(B_l^{t' + 1})$.
\end{proof}

\begin{observation} \label{Observation: polynomialtime}
    If a bundle $B_b^t$ is selected as the big bundle for the last time in round $t$, then it can never be selected as the least bundle in any subsequent round.
\end{observation}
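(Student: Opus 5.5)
The plan is to argue by monotonicity, using Observation~\ref{observation: nonincreasing}. Suppose $B := B_b^t$ is the big bundle in round $t$ and is never the big bundle again. Let $t' > t$ be any later round. I will show $c(B) > c(B_l^{t'})$ at the start of round $t'$, so that $B$ cannot attain the minimum and is never chosen as the least bundle. Because of the smallest-index tie-breaking rule, I will make sure the inequality is strict, or else handle ties by that rule.

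First I track $B$ from the end of round $t$ onward. In round $t$, the largest-cost item $e$ is removed from $B$, so afterwards $c(B) = c(B_b^t) - c(e) = \bar{c}(B_b^t)$. The while-loop condition at round $t$ gives $\bar{c}(B_b^t) > c(B_l^t)$. After round $t$, $B$ is never again the big bundle, so it loses no more items. It can only gain items if it is chosen as the least bundle, which is exactly what I want to rule out. So I argue by contradiction: let $t' > t$ be the first round after $t$ in which $B$ is the least bundle. Between $t$ and $t'$ the bundle $B$ is untouched, so $c(B) = \bar{c}(B_b^t)$ at the start of round $t'$.

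Next I compare this with the least bundle at round $t'$. By the choice of $B_l^{t'}$ and the while-condition at round $t'$,
\[
c(B_l^{t'}) < \bar{c}(B_b^{t'}) \leq \bar{c}(B_b^{t}),
\]
where the second inequality is Observation~\ref{observation: nonincreasing}. This gives $c(B_l^{t'}) < \bar{c}(B_b^t) = c(B)$. Hence $B$ is not a minimum-cost bundle at round $t'$ and cannot be $B_l^{t'}$, a contradiction. The strict inequality makes tie-breaking irrelevant.

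The main obstacle is the assumption that $B$ is unchanged between rounds $t$ and $t'$. This is exactly why I choose $t'$ to be the first later round in which $B$ is least: before $t'$, $B$ is neither big (by hypothesis) nor least (by minimality), so its items are fixed. I also need the while-condition at round $t'$ to hold. This is automatic, because round $t'$ is executed only when the loop condition is satisfied.
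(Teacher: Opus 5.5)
Your proposal is correct and follows essentially the same route as the paper's proof. Both arguments take the first later round $t'$ in which the bundle is least. They note the bundle is unchanged since round $t$, so its cost is $\bar{c}(B_b^t)$. They then contradict the while-condition $\bar{c}(B_b^{t'}) > c(B_l^{t'})$ using Observation~\ref{observation: nonincreasing}. You also spell out why the bundle is untouched between $t$ and $t'$ (by the hypothesis and by the minimality of $t'$), which the paper leaves implicit.
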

\begin{proof}
    Suppose for contradiction that $B_b^t$ is selected as the least bundle in some later round $t' > t$, and let $t'$ be the smallest such round.
    Assume that in round $t'$, the bundle $B_b^{t'}$ is selected as the big bundle, and $B_l^{t'} = B_b^t$ is selected as the least bundle.
    Since round $t$ is the last round that $B_b^t$ is selected as the big bundle, we have $c(B_l^{t'}) = \bar{c}(B_b^t)$, i.e., items in bundle $B_b^t$ remain unchanged from round $t$ to $t' - 1$.
    Then in round $t'$, we have:
    $$
    \bar{c}(B_b^{t'}) > c(B_l^{t'}) = \bar{c}(B_b^t).
    $$
    However, by Observation~\ref{observation: nonincreasing} and the fact that $t < t'$, we know that
    $$
        \bar{c}(B_b^t) \geq \bar{c}(B_b^{t'}),
    $$
    which leads to a contradiction.

    Hence, a bundle that is selected as the big bundle for the last time in some round can never be selected as the least bundle in any subsequent round.
\end{proof}

\begin{lemma} \label{lemma: whileloop}
    The while-loop on Line~\ref{While-loop: EF1} runs at most $nm$ rounds.
\end{lemma}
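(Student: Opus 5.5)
Our plan is to charge every round of the while-loop on Line~\ref{While-loop: EF1} to the bundle chosen as the big bundle in that round, and to show that each bundle can be the big bundle in at most $m$ rounds. With $n$ bundles this gives at most $nm$ rounds. The tool is Observation~\ref{Observation: polynomialtime}: once a bundle has been the big bundle for the last time, it never again receives an item.

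\textbf{Step 1: a bundle that stops receiving can only shrink.} Fix a bundle $B$. Split the rounds in which $B$ is the big bundle into maximal stretches during which $B$ is never the least bundle. Within such a stretch, $B$ only loses items, one item per round in which it is the big bundle. So a single stretch contains at most $|B| \le m$ rounds in which $B$ is the big bundle.

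\textbf{Step 2: after the last big-bundle round, nothing returns.} Let $t$ be the last round in which $B$ is the big bundle. Observation~\ref{Observation: polynomialtime} shows that $B$ is never the least bundle after round $t$. Hence the items $B$ gives away in big-bundle rounds are never given back once $t$ has passed.

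\textbf{Step 3 (main obstacle): interleaving before round $t$.} Before round $t$, $B$ may alternate between being the big bundle and being the least bundle. Then it could lose and regain items repeatedly, and Step 1 alone would not bound the count. To handle this, we would track the fixed item $e$ that $B$ loses in a big-bundle round and show that $e$ never comes back to $B$.

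\begin{itemize}
\item Suppose $e$ leaves $B$ in round $t_1$ and later returns to $B$ in round $t_2$, with $B$ the least bundle in round $t_2$.
\item In round $t_2$, $e$ is the maximum-cost item of the big bundle $B_b^{t_2}$.
\item By Observation~\ref{observation: nonincreasing}, $\bar{c}(B_b^{t_2}) \le \bar{c}(B_b^{t_1})$ and $c(B_l^{t_2}) \ge c(B_l^{t_1})$.
\item At round $t_1$, after $e$ was removed, $B$ had cost exactly $\bar{c}(B_b^{t_1})$, since $e$ was its maximum-cost item.
\end{itemize}

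The aim is to show that $B$ never again drops below $\bar{c}$ of any later big bundle. That would contradict $B$ being the least bundle while the loop condition $\bar{c}(B_b^{t_2}) > c(B)$ holds. This is essentially the argument of Observation~\ref{Observation: polynomialtime}, applied from the round in which $e$ leaves rather than from the last big-bundle round.

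If that direct argument fails, we would fall back on a cruder count. Each item moves, via a big-bundle round, away from any given bundle at most once. So the total number of moves is at most $\sum_{e}(\text{number of bundles } e \text{ leaves}) \le nm$. Since every round moves exactly one item, this bounds the number of rounds by $nm$.
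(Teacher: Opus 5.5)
Your main line is the paper's argument: charge each round to its big bundle, and show that from its first big round on a bundle never receives an item again, so it can be big at most $m$ times. You are right that Observation~\ref{Observation: polynomialtime} is stated only for the \emph{last} big round, so it does not by itself rule out the big/least interleaving in your Step 3. The paper's proof quietly uses the stronger form ``once a bundle has been the big bundle, it is never the least bundle afterwards''.

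\textbf{The gap.} You leave that stronger form as an ``aim'', and it is exactly the step the bound rests on. It closes with the Observation's own minimality trick. Suppose $B$ is big in round $t_1$. Let $s>t_1$ be the first later round in which $B$ is least, and let $t^*\in[t_1,s)$ be the last round before $s$ in which $B$ is big. Note that $B$ cannot be big and least in the same round, since $\bar{c}(B)\le c(B)$ would violate the loop condition. After round $t^*$ we have $c(B)=\bar{c}(B_b^{t^*})$. By the choice of $t^*$ and $s$, $B$ is untouched in rounds $t^*+1,\dots,s-1$. Hence $c(B_l^{s})=\bar{c}(B_b^{t^*})\ge\bar{c}(B_b^{s})$ by Observation~\ref{observation: nonincreasing}, which contradicts $\bar{c}(B_b^{s})>c(B_l^{s})$. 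With this claim, your Step 1 finishes the proof, because all big rounds of $B$ fall into a single stretch. Tracking the particular item $e$ is unnecessary; $e$ plays no role in the contradiction.

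\textbf{The fallback does not help.} An item can leave the same bundle twice only if it returned to that bundle in between. So ``each item leaves any given bundle at most once'' is precisely the no-return claim of Step 3. If the direct argument failed, the fallback would fail with it. Once the claim holds, the fallback is just the same $nm$ count taken per item instead of per bundle. Drop it and write out the argument above.
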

\begin{proof}
In each round, a bundle is selected as the big bundle, and exactly one item is removed from it.
Hence, the size of the bundle decreases by exactly one each time it is selected as the big bundle.
By Observation~\ref{Observation: polynomialtime}, once a bundle is selected as the big bundle, it can never be selected as the least bundle in any subsequent round, which implies that any bundle can be selected as the big bundle at most $m$ times.
Since there are $n$ bundles, the while-loop on Line~\ref{While-loop: EF1} terminates after at most $nm$ rounds.

We are now ready to prove Theorem~\ref{theorem: EF1andPO}.

By conducting an analysis similar to the one in Lemma~\ref{lemma: M0EFX}, we can show that at the end of Phase 1, every bundle is MMS-feasible. 
Additionally, as established in Lemma~\ref{lemma: whileloop}, the while-loop will eventually terminate. 
The condition of the while-loop ensures that every bundle also satisfies EF1.

In Phase 2, each agent selects a bundle and adds items that impose zero cost for them, which guarantees that both MMS and EF1 are preserved. 
Furthermore, since the final allocation minimizes the social cost, it guarantees Pareto Optimality. 
Therefore, the final allocation computed by Algorithm~\ref{alg: EF1} satisfies EF1, MMS, and PO simultaneously.

Finally, if we use a PTAS to compute the ($1-\epsilon$)-MMS partition in Phase 1 instead, we can guarantee that the overall algorithm runs in polynomial time, thereby proving Corollary~\ref{corollary: EF1}.
\end{proof}

\end{document}